\documentclass[aps,pra,onecolumn,notitlepage,superscriptaddress,nofootinbib,longbibliography]{revtex4-2}

\usepackage{amsmath,amssymb,amsthm}
\usepackage{graphicx}
\usepackage{hyperref}
\usepackage{bm}
\usepackage{booktabs}
\usepackage{tikz}
\usetikzlibrary{arrows.meta,calc,positioning,patterns,decorations.pathmorphing,matrix}

\newtheorem{theorem}{Theorem}
\newtheorem{lemma}{Lemma}
\newtheorem{proposition}{Proposition}
\newtheorem{corollary}{Corollary}
\theoremstyle{definition}
\newtheorem{definition}{Definition}
\newtheorem{example}{Example}
\newtheorem{remark}{Remark}

\newcommand{\F}{\mathbb{F}}
\newcommand{\Z}{\mathbb{Z}}
\newcommand{\rank}{\mathrm{rank}}
\newcommand{\wt}{\mathrm{wt}}
\newcommand{\Span}{\mathrm{span}}
\newcommand{\row}{\mathrm{row}}
\newcommand{\Ker}{\mathrm{ker}}

\newcommand{\Dir}{\mathcal{D}} % direction alphabet
\newcommand{\word}[1]{\texttt{#1}} % allows embedded math mode (e.g. NE$^2$NE$^2$N)

\DeclareMathOperator{\Ann}{Ann}        % annihilator (ideal/module annihilator)
\begin{document}

\title{Structural Analysis of Directional qLDPC Codes}

\author{Mohammad Rowshan}
\email{m.rowshan@unsw.edu.au}
\affiliation{University of New South Wales (UNSW), Kensington, NSW 2052, Australia}

\date{\today}

\begin{abstract}
Directional codes, recently introduced by Gehér--Byfield--Ruban \cite{Geher2025Directional}, constitute a hardware-motivated family of quantum low-density parity-check (qLDPC) codes. These codes are defined by stabilizers measured by ancilla qubits executing a fixed \emph{direction word} (route) on square- or hex-grid connectivity. In this work, we develop a comprehensive \emph{word-first} analysis framework for route-generated, translation-invariant CSS codes on rectangular tori. Under this framework, a direction word $W$ deterministically induces a finite support pattern $P(W)$, from which we analytically derive: 
(i)~a closed-form route-to-support map; 
(ii)~the odd-multiplicity difference lattice $L(W)$ that classifies commutation-compatible $X/Z$ layouts; and 
(iii)~conservative finite-torus admissibility criteria. 
Furthermore, we provide: 
(iv)~a rigorous word equivalence and canonicalization theory (incorporating dihedral lattice symmetries, reversal/inversion, and cyclic shifts) to enable symmetry-quotiented searches; 
(v)~an ``inverse problem'' criterion to determine when a translation-invariant support pattern is realizable by a single route, including reconstruction and non-realizability certificates; and 
(vi)~a quasi-cyclic (group-algebra) reduction for row-periodic layouts that explains the sensitivity of code dimension $k$ to boundary conditions. 
As a case study, we analyze the word $W=\word{NE$^2$NE$^2$N}$ end-to-end. We provide explicit stabilizer dependencies, commuting-operator motifs, and an exact criterion for dimension collapse on thin rectangles: for $(L_x, L_y) = (2d, d)$ with row alternation, we find $k=4$ if $6 \mid d$, and $k=0$ otherwise.
\end{abstract}

\maketitle

%\tableofcontents

% ==========================================================
\section{Introduction}
% ==========================================================

Quantum error correction (QEC) enables reliable quantum computation by encoding logical information into many physical qubits.
The stabilizer and CSS frameworks convert QEC into linear-algebraic constraints over $\F_2$ \cite{Shor1995Scheme,Calderbank1996Good,Steane1996Error,Gottesman1997Stabilizer}.
In classical coding theory, low-density parity-check (LDPC) codes are defined by sparse parity-check graphs and admit efficient iterative decoding \cite{Gallager1962Lowdensity,Tanner1981Recursive,MacKay1999Good}.
Quantum LDPC (qLDPC) codes adapt this sparsity to the stabilizer setting \cite{MacKay2004Sparse,Terhal2015Review,Errorcorrectionzoo2025QLDPC}.

A practical issue is that ``good'' asymptotic qLDPC parameters do not automatically imply implementable layouts.
Two-dimensional topological codes such as toric and surface codes have strictly local checks and mature decoders \cite{Kitaev2003Fault,Dennis2002Topological,Fowler2012Surface}, but 2D locality imposes strong scaling constraints \cite{Bravyi2010Tradeoffs,Bravyi2009Nogotheorem}.
Meanwhile, many powerful qLDPC constructions (products, Cayley/expander ideas, asymptotically good families) are not natively aligned with strict 2D connectivity \cite{Tillich2014Quantum,Couvreur2013Construction,Bravyi2014Homological,Leverrier2015Quantumexpander,Panteleev2021Asymptotically,Panteleev2022Almostlinear,Hastings2021Fiber,Breuckmann2021Balanced}.
Recent progress shows that algebraic structure can be paired with engineering constraints: bivariate bicycle codes demonstrate high thresholds and low overhead \cite{Bravyi2024Highthreshold} and motivate architectural proposals \cite{Yoder2025Tour}, while connectivity-reduction techniques such as morphing circuits \cite{Shaw2025Lowering} and hardware-aware placement/routing studies \cite{Mathews2025Placing,Berthusen2025Toward} further emphasize co-design.

Directional codes, introduced in \cite{Geher2025Directional}, take the co-design philosophy to an extreme: the stabilizer support is generated by a short \emph{direction word} (a route) that every ancilla follows on a low-degree lattice.
This route-generated restriction makes strict square/hex connectivity natural, but it also introduces delicate structural dependence on the word and on boundary conditions:
small changes in $W$ can alter (i) which $X/Z$ layouts are allowed, (ii) which finite tori admit collision-free schedules, and (iii) which boundary conditions preserve or collapse the dimension~$k$.

\subsection{Relationship to the seminal work and %originality of 
contributions}
The construction and core layout-classification mechanism are due to Geh{\'e}r--Byfield--Ruban \cite{Geher2025Directional}.
Sections~\ref{sec:construction}--\ref{sec:invariants} represent these foundational ingredients from \cite{Geher2025Directional} in a word-centric language with explicit examples and diagrams (route-to-support formula; odd-multiplicity lattice governing commutation-compatible layouts; conservative finite-torus admissibility filters). 
The contributions of this work begin in Sections~\ref{sec:equiv}--\ref{sec:qc}:
\begin{enumerate}
\item \emph{Word equivalence and canonical representatives} (Sec.~\ref{sec:equiv}): we formalize an equivalence group generated by dihedral lattice symmetries, reversal with inversion, and cyclic shifts, prove that it induces code isomorphisms on compatible tori, and give a canonicalization procedure for symmetry-quotiented searches.
\item \emph{The route realizability inverse problem} (Sec.~\ref{sec:inverse}): given a translation-invariant offset set $P$, when does there exist a single direction word $W$ such that $P=P(W)$? We give a reconstruction criterion and practical non-realizability certificates.
\item \emph{A quasi-cyclic (group-algebra) reduction for row-periodic layouts} (Sec.~\ref{sec:qc}): for common layouts such as row alternation, we reduce stabilizer-dependency and rank questions to annihilator computations in a small polynomial ring.
As an application we prove an exact divisibility criterion explaining the observed $k$-collapse for $W=\word{NE$^2$NE$^2$N}$ on thin rectangles (Theorem~\ref{thm:kc-collapse}).
\end{enumerate}

\subsection{Outline}
Section~\ref{sec:construction} develops the checkerboard model and the route-to-support map. %, with embedded examples and diagrams.
Section~\ref{sec:invariants} introduces the odd-multiplicity difference lattice and its layout classification theorem. %, again with examples and diagrams.
Section~\ref{sec:equiv} develops word symmetries and canonicalization.
Section~\ref{sec:inverse} treats the inverse problem of route realizability.
Section~\ref{sec:finitetorus} discusses finite-torus admissibility and introduces the polynomial/QC viewpoint, while Section~\ref{sec:qc} develops the row-periodic annihilator formalism.
Section~\ref{sec:families} provides worked families and a detailed case study.
Section~\ref{sec:numerics} documents the reproducible numerical pipeline and representative scans \cite{rowshan2026dircoode}.
Finally, Section~\ref{sec:conclusion} summarizes and discusses future directions.

% ==========================================================
\section{Checkerboard Model and Route-Generated Checks (after \cite{Geher2025Directional})}\label{sec:construction}
% ==========================================================

We adopt the standard CSS setting \cite{Calderbank1996Good,Steane1996Error,Gottesman1997Stabilizer}, but present it in the coordinate system that makes directional codes convenient to compute with.

\subsection{CSS codes and parameters}
A CSS code is specified by two binary matrices $H_X,H_Z$ with $H_XH_Z^\top=0$.
The number of logical qubits is
\begin{equation}
k=n-\rank(H_X)-\rank(H_Z),
\end{equation}
and the (static) distances are
\begin{align}
d_X &= \min\{\wt(x): x\in\Ker(H_Z)\setminus \row(H_X)\},\\
d_Z &= \min\{\wt(z): z\in\Ker(H_X)\setminus \row(H_Z)\},
\end{align}
with $d=\min(d_X,d_Z)$.

\subsection{Checkerboard torus and layouts}
% We work on the 2D torus
% \[
% G=\Z_{L_x}\times \Z_{L_y},
% \]
% with $L_x,L_y$ even, so the checkerboard split is balanced. Coordinates are taken modulo $(L_x,L_y)$.
% We split sites by parity
% \begin{align}
% \Lambda_Q &= \{(x,y)\in G : x+y\equiv 0\pmod 2\} \quad\text{(data)},\\
% \Lambda_A &= \{(x,y)\in G : x+y\equiv 1\pmod 2\} \quad\text{(ancilla)}.
% \end{align}
% Thus, $|\Lambda_Q|=|\Lambda_A|=n=\tfrac12L_xL_y$.
% A \emph{layout} is a partition $\Lambda_A=\Lambda_X\sqcup\Lambda_Z$; equivalently, a labeling function $\alpha:\Lambda_A\to\{X,Z\}$.

We work in 2D grid $G=\Z_{L_x}\times \Z_{L_y}$ with $L_x,L_y$ even, to make the checkerboard split perfectly balanced. Here, $\Z_{L}$ denotes integers modulo of $L$ (that is, $0,1,\dots,L-1$ with wraparound).  Since the grid ``wraps'' in both directions $x,y$, it is a torus topology. 
We split sites by parity
\begin{align}
\Lambda_Q &= \{(x,y)\in G : x+y\equiv 0\pmod 2\} \quad\text{(data)},\\
\Lambda_A &= \{(x,y)\in G : x+y\equiv 1\pmod 2\} \quad\text{(ancilla)}.
\end{align}
Thus, $|\Lambda_Q|=|\Lambda_A|=n=\tfrac12L_xL_y$.
A \emph{layout} is a partition $\Lambda_A=\Lambda_X\sqcup\Lambda_Z$; every ancilla site will be assigned to measure either an X-check (goes in $\Lambda_X$), or a Z-check (goes in $\Lambda_Z$). Equivalently, a label function
$\alpha:\Lambda_A\to\{X,Z\}$ tells you the check type for each ancilla site. 
Figure~\ref{fig:checkerboard} illustrates an example layout for $G=\Z_8\times \Z_6$. 
On a checkerboard lattice, data qubits live on one sublattice, and ancillas live on the other sublattice.

\begin{figure}[t]
\centering
\begin{tikzpicture}[scale=0.55]
  \draw[step=1,gray!25,very thin] (0,0) grid (8,6);
  \foreach \x in {0,...,7}{
    \foreach \y in {0,...,5}{
      \pgfmathtruncatemacro{\p}{mod(\x+\y,2)}
      \ifnum\p=0
        \fill[black] (\x+0.5,\y+0.5) circle (2.0pt);
      \fi
    }
  }
  % example layout: row alternation
  \foreach \x in {0,...,7}{
    \foreach \y in {0,...,5}{
      \pgfmathtruncatemacro{\p}{mod(\x+\y,2)}
      \ifnum\p=1
        \pgfmathtruncatemacro{\py}{mod(\y,2)}
        \ifnum\py=0
          \node at (\x+0.5,\y+0.5) {\scriptsize X};
        \else
          \node at (\x+0.5,\y+0.5) {\scriptsize Z};
        \fi
      \fi
    }
  }
\end{tikzpicture}
\caption{Checkerboard data/ancilla partition (shown as a patch) and a simple example layout with row alternation.
Here, black dots are data qubits ($x+y$ even) and X/Z labels mark ancillas ($x+y$ odd).}
\label{fig:checkerboard}
\end{figure}

\subsection{Direction words and compressed notation}
A \emph{direction word} is a string $W=d_1d_2\cdots d_w$ with letters from $\Dir=\{N,E,S,W\}$, interpreted as steps
\[
N=(0,1),\qquad E=(1,0),\qquad S=(0,-1),\qquad W=(-1,0).
\]
We also use a compact notation: for example $\word{NE$^2$N}$ means $\word{N E E N}$, and $\word{NE$^2$NE$^2$N}$ means $\word{N E E N E E N}$.

\subsection{From a word to a support pattern}
Let the partial sums  $S_j=\sum_{t=1}^j d_t$ with $S_0=(0,0)$ be the running ``walk location'', which gives the position after $j$ steps, if you start at the origin. For example, after step 1, you are at $S_1=d_1$, after step 2 you are at $S_2=d_1+d_2$, etc. 
%In directional codes, the data qubit involved at step $j$ depends only on the ancilla position before and after that step.
%Because these data qubits lie at midpoints of consecutive route positions, we use a doubled integer coordinate convention, leading to the following closed form \cite{Geher2025Directional}.

In directional codes, the order in which an ancilla interacts with data qubits follows a fixed route on the lattice. At each step, the data qubit involved depends only on the ancilla’s position before and after that step. Because these data qubits lie halfway between consecutive route positions, we multiply coordinates by two so that all positions are integers. This leads to a simple formula that directly gives the data-qubit location for each step. %\cite{Geher2025Directional}.

\begin{lemma}[Route-to-offset formula]\label{lem:offset}
Define integer offsets
\begin{equation}
Q_j=S_{j-1}+S_j = 2\sum_{t=1}^{j-1} d_t + d_j,\qquad j=1,\dots,w.
\end{equation}
Then an ancilla anchored at $a\in\Lambda_A$ interacts with data qubits at sites $a+Q_j$ (mod $G$).
\end{lemma}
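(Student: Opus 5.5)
The plan is to make precise the modelling convention behind the statement and then verify the formula by a direct telescoping computation. The convention, stated just before Lemma~\ref{lem:offset}, is that the ancilla's physical walk lives in the original lattice, and at step $j$ it moves from its current position to the next one along $d_j$; the data qubit touched in that step is the site halfway between the two route positions. The grid $G$ is the doubled-coordinate lattice: one physical lattice step equals two grid units. In these coordinates an ancilla anchored at $a$ occupies $a+2S_{j-1}$ before step $j$ and $a+2S_j$ after it. I will state this identification explicitly as the interpretation of ``position before and after that step.''

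The proof then has three short steps. First, the midpoint of $a+2S_{j-1}$ and $a+2S_j$ is $a+S_{j-1}+S_j$, which gives $Q_j=S_{j-1}+S_j$. Second, substituting $S_j=S_{j-1}+d_j$ yields $Q_j=2S_{j-1}+d_j=2\sum_{t=1}^{j-1}d_t+d_j$, the second expression in the statement. Third, I check that the site $a+Q_j$ is really a data site, so the formula is consistent with the checkerboard split. Every $d_t$ has coordinate sum $\pm1$, so $2S_{j-1}$ has even coordinate sum and $d_j$ has odd coordinate sum. Hence $Q_j$ has odd parity, and since $a\in\Lambda_A$ has odd parity, $a+Q_j$ has even parity, i.e.\ it lies in $\Lambda_Q$. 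The route positions $a+2S_j$ themselves stay in $\Lambda_A$, because $2S_j$ has even coordinate sum.

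Finally, reduction mod $(L_x,L_y)$ is compatible with all of this. The maps involved are affine over $\Z^2$, so they descend to $G$. Parity is also well defined on $G$ because $L_x$ and $L_y$ are even.

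The only real obstacle is definitional rather than computational: making the scaling convention explicit, so that the "midpoint" is an integer point of $G$ and not a half-integer point. Once it is fixed that route positions are $a+2S_j$ in grid units, everything else is a one-line telescoping identity plus the parity check.
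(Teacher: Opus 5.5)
Your proposal is correct and matches the paper's proof: the paper also takes the $j$th interaction point to be the midpoint sum $S_{j-1}+S_j$ in the doubled-coordinate convention, then substitutes $S_j=S_{j-1}+d_j$ to obtain $2S_{j-1}+d_j$. Your explicit placement of route positions at $a+2S_j$, the parity check that $a+Q_j\in\Lambda_Q$, and the remark on reduction modulo $(L_x,L_y)$ are harmless additions that the paper leaves implicit.
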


\begin{remark}[Route interpretation as a local-gate schedule]\label{rem:route-not-motion}
Although a direction word is drawn as a geometric route, no qubit is physically transported across the lattice.
Instead, the word specifies a \emph{sequence of local two-qubit interactions} that propagates the ancilla's \emph{quantum state} along nearest-neighbor edges.
Data qubits remain fixed, and each interaction is local in the underlying connectivity graph.
Directional codes thus trade strict geometric locality of a single stabilizer (as in plaquette codes) for a fixed, bounded-depth measurement schedule compatible with low-degree hardware graphs \cite{Geher2025Directional}.
\end{remark}

\begin{definition}[Support pattern]
The support pattern of $W$ is the set
\begin{equation}
P(W)=\{Q_1,\dots,Q_w\}\subset\Z^2,
\end{equation}
understood modulo $(L_x,L_y)$ when embedded in $G$.
A directional check anchored at $a$ acts on data qubits in the translated support $a+P(W)$.
\end{definition}
Note that the shifted (translated) locations of the offset set $P(W)$ by the anchor position $a$ are the data qubits included in the stabilizer check. Thus, $W$ gives the shape and $a$ gives the placement.

The construction becomes conceptually simple:
\emph{choosing $W$ chooses $P(W)$}, hence fixes the check shape (up to translation), and the remaining degrees of freedom are (i) the layout (which ancillas are $X$ or $Z$) and (ii) the boundary conditions (the torus vectors).

\paragraph{Pattern-code viewpoint.}
Once $P(W)$ is fixed, directional codes fit inside translation-invariant \emph{pattern CSS codes}:
\begin{equation}
S_a^X=\prod_{q\in a+P(W)} X_q,\quad a\in\Lambda_X,
\end{equation}
\begin{equation}
S_b^Z=\prod_{q\in b+P(W)} Z_q,\quad b\in\Lambda_Z,
\end{equation}
where $\Lambda_X,\Lambda_Z$ form a partition of the ancillas (the \emph{layout}).
%The core question becomes: \emph{for a given $W$, which layouts are allowed?}
%The answer is encoded by a lattice computed from $P(W)$ (Section~\ref{sec:invariants}).

%%

\begin{remark}
    We cover the whole grid by translating the same route-defined check to every ancilla position. The union of all translated supports $\bigcup_{a \in \Lambda_A}(a+P(W))$ spreads across the entire data lattice, so every data qubit participates in checks.
\end{remark}

\begin{example}[word $\rightarrow$ check \emph{and} a row of $H$]
\begin{equation*}
P(W)=\{(0,1),(1,2),(3,2),(4,3)\},
\end{equation*}
where the offsets are obtained as $Q_1=S_0+S_1=(0,0)+(0,1)=(0,1)$, $Q_2=S_1+S_2=(0,1)+(1,1)=(1,2)$, etc. 
For the anchor $a=(1,0)\in\Lambda_A$, the translated support is
\begin{equation*}
a+P(W)=\{(1,1),(2,2),(4,2),(5,3)\}\subset\Lambda_Q,
\end{equation*}
where $x+y$ is even parity for all positions. %If we fix any ordering of data qubits, this support becomes a length-$n$ binary row with ones in the four corresponding columns.
If we index data qubits in lexicographic order by looping over rows increasing $y=0,1,\dots,5$, and inside each row, looping over columns in increasing $x=0,1,,\dots,7$ 
(\emph{skipping} ancilla sites), consideing only sites where $x+y$ is even, these coordinates correspond to column indices
$\{4,9,10,14\}$ of $H$. Therefore, the parity-check row contributed by this ancilla is
\begin{equation*}
h_a = e_4+e_9+e_{10}+e_{14}\in \mathbb{F}_2^{n}.
\end{equation*}
Figure~\ref{fig:toycheck-row} shows this translation geometrically. 
To get the entire $H$ matrix, Loop over all ancillas $a \in \Lambda_A$ and place that row in $H_X$ if $\alpha(a)=X$ or $H_Z$ if $\alpha(a)=Z$. Stack all rows and we get $\left|\Lambda_A\right|=24$ rows where each row has weight $\boldsymbol{w}=4$ for \word{NE$^2$N}. 
%Whether this row is placed into $H_X$ or $H_Z$ is determined by the layout function $\alpha(a)$. 

\end{example}

\begin{figure}[t]
\centering
\begin{tikzpicture}[scale=0.55]
  % draw a small patch of the checkerboard
  \foreach \x in {0,...,7} {
    \foreach \y in {0,...,5} {
      \pgfmathtruncatemacro{\p}{mod(\x+\y,2)}
      \ifnum\p=0
        \fill[black] (\x,\y) circle (1.6pt); % data
      \else
        \fill[gray]  (\x,\y) circle (1.6pt); % ancilla
      \fi
    }
  }

  % anchor ancilla a=(1,0)
  \draw[thick] (1,0) circle (3.0pt);
  \node[below] at (1,0) {\scriptsize $a$};

  % support a+P(W) for W=NE2N : (1,1),(2,2),(4,2),(5,3)
  \foreach \x/\y/\lab in {1/1/4, 2/2/9, 4/2/10, 5/3/14} {
    \draw[thick] (\x,\y) circle (3.0pt);
    \node[above right] at (\x,\y) {\scriptsize $\lab$};
  }

  % light guide edges (optional, helps reading)
  \draw[thin,gray!40] (1,0)--(1,1);
  \draw[thin,gray!40] (1,1)--(2,2);
  \draw[thin,gray!40] (2,2)--(4,2);
  \draw[thin,gray!40] (4,2)--(5,3);
\end{tikzpicture}
\caption{Toy instance illustrating the terminology. Gray sites are ancillas, black sites are data.
For $W=\word{NE$^2$N}$ and anchor $a=(1,0)$ the four circled data qubits are the translated support $a+P(W)$.
%The small numbers show the corresponding data-column indices for one concrete ordering convention,
%so the row inserted into $H_X$ or $H_Z$ is $e_4+e_9+e_{10}+e_{14}$.
}
\label{fig:toycheck-row}
\end{figure}
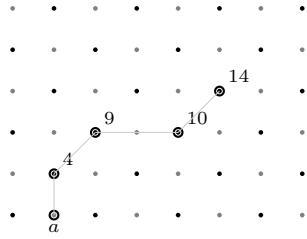

\subsection{Representative support patterns}
Figure~\ref{fig:fivewords} shows $P(W)$ for five representative words.
The first four are benchmark examples in \cite{Geher2025Directional}; the fifth is our case study.

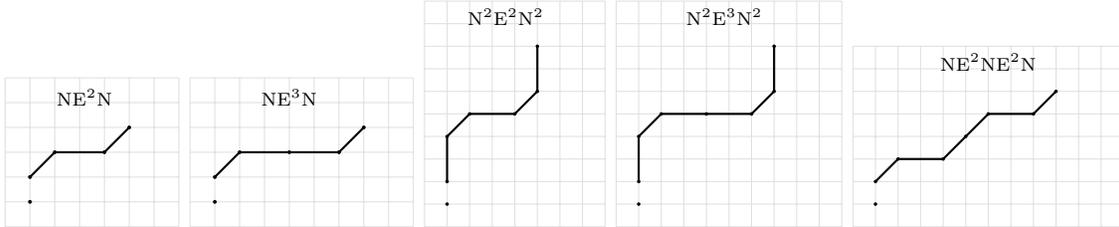
\begin{figure*}[t]
\centering
\begin{tabular}{ccccc}
\begin{tikzpicture}[scale=0.33]
  \draw[step=1,gray!25,very thin] (-1,-1) grid (6,5);
  \fill (0,0) circle (2.2pt);
  \foreach \x/\y in {0/1,1/2,3/2,4/3}{
    \fill (\x,\y) circle (2.2pt);
  }
  \draw[thick] (0,1)--(1,2)--(3,2)--(4,3);
  \node at (2.2,4.2) {\scriptsize NE$^2$N};
\end{tikzpicture}
&
\begin{tikzpicture}[scale=0.33]
  \draw[step=1,gray!25,very thin] (-1,-1) grid (8,5);
  \fill (0,0) circle (2.2pt);
  \foreach \x/\y in {0/1,1/2,3/2,5/2,6/3}{
    \fill (\x,\y) circle (2.2pt);
  }
  \draw[thick] (0,1)--(1,2)--(3,2)--(5,2)--(6,3);
  \node at (3.0,4.2) {\scriptsize NE$^3$N};
\end{tikzpicture}
&
\begin{tikzpicture}[scale=0.30]
  \draw[step=1,gray!25,very thin] (-1,-1) grid (7,9);
  \fill (0,0) circle (2.2pt);
  \foreach \x/\y in {0/1,0/3,1/4,3/4,4/5,4/7}{
    \fill (\x,\y) circle (2.2pt);
  }
  \draw[thick] (0,1)--(0,3)--(1,4)--(3,4)--(4,5)--(4,7);
  \node at (2.6,8.3) {\scriptsize N$^2$E$^2$N$^2$};
\end{tikzpicture}
&
\begin{tikzpicture}[scale=0.30]
  \draw[step=1,gray!25,very thin] (-1,-1) grid (9,9);
  \fill (0,0) circle (2.2pt);
  \foreach \x/\y in {0/1,0/3,1/4,3/4,5/4,6/5,6/7}{
    \fill (\x,\y) circle (2.2pt);
  }
  \draw[thick] (0,1)--(0,3)--(1,4)--(3,4)--(5,4)--(6,5)--(6,7);
  \node at (3.8,8.3) {\scriptsize N$^2$E$^3$N$^2$};
\end{tikzpicture}
&
\begin{tikzpicture}[scale=0.30]
  \draw[step=1,gray!25,very thin] (-1,-1) grid (11,7);
  \fill (0,0) circle (2.2pt);
  \foreach \x/\y in {0/1,1/2,3/2,4/3,5/4,7/4,8/5}{
    \fill (\x,\y) circle (2.2pt);
  }
  \draw[thick] (0,1)--(1,2)--(3,2)--(4,3)--(5,4)--(7,4)--(8,5);
  \node at (5.0,6.3) {\scriptsize NE$^2$NE$^2$N};
\end{tikzpicture}
\end{tabular}
\caption{Support patterns $P(W)$ for five representative words, plotted as integer offsets relative to an ancilla anchor at the origin.}
\label{fig:fivewords}
\end{figure*}

% ==========================================================
\section{Layout Classification via Odd-Multiplicity Differences (after \cite{Geher2025Directional})}\label{sec:invariants}
% ==========================================================

Directional codes impose a global constraint: not every layout $\alpha$ is compatible with a given $W$.
The mechanism is overlap parity: an $X$-check and a $Z$-check anticommute iff they overlap on an odd number of data qubits.
Since all checks share the same shape $P(W)$ up to translation, overlap parity depends only on the displacement between anchors.
This section turns that intuition into a computable invariant. %\cite{Geher2025Directional}.

\subsection{Odd-multiplicity differences and the forced-equality lattice}
Let $P(W)=\{Q_1,\dots,Q_w\}$.
Define the difference multiset
\begin{equation}
\Delta(W)=\{v=Q_j-Q_i: 1\le i<j\le w\},
\end{equation}
with multiplicity $\mu(v)$ counting how many pairs yield the same $v$.

\begin{definition}[Odd-multiplicity difference set and lattice]
Define
\begin{equation}
\Delta_{\mathrm{odd}}(W)=\{v\in\Z^2:\mu(v)\equiv 1 \ (\mathrm{mod}\ 2)\},
\end{equation}
and the \emph{odd-multiplicity difference lattice}
\begin{equation}
L(W)=\Span_{\Z}\big(\Delta_{\mathrm{odd}}(W)\big).
\end{equation}
\end{definition}

\begin{remark}[Intuition for $L(W)$]
Suppose a displacement $\delta$ is ``forbidden'' (i.e., it produces an odd overlap). Then ancillas at
$a$ and $a+\delta$ are not allowed to have opposite types, so they must share the same label.
Applying the same rule repeatedly shows that ancillas at
$a,\,a+\delta,\,a+2\delta,\,a+3\delta,\dots$ must all have the same label; in other words, labels are
constant along all integer multiples of $\delta$.
If there are several forbidden displacements, the same chaining works for any sum of them.
Therefore we collect all integer combinations of forbidden displacements into the lattice $L(W)=\Span_{\mathbb{Z}}\big(\Delta_{\mathrm{odd}}(W)\big),$ 
and $L(W)$ is exactly the set of displacements along which ancilla labels are forced to be equal.
\end{remark}

%\subsection{Geometric illustration of an odd-overlap displacement}
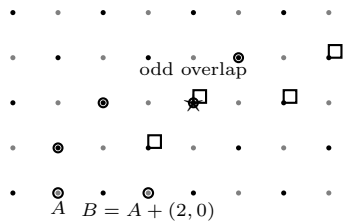
\begin{figure}[t]
\centering
\begin{tikzpicture}[scale=0.60]
  \foreach \x in {0,...,7} {
    \foreach \y in {0,...,4} {
      \pgfmathtruncatemacro{\p}{mod(\x+\y,2)}
      \ifnum\p=0
        \fill[black] (\x,\y) circle (1.6pt);
      \else
        \fill[gray]  (\x,\y) circle (1.6pt);
      \fi
    }
  }

  % two ancilla anchors A and B=A+(2,0)
  \draw[thick] (1,0) circle (3.0pt);
  \node[below] at (1,0) {\scriptsize $A$};
  \draw[thick] (3,0) circle (3.0pt);
  \node[below] at (3,0) {\scriptsize $B=A+(2,0)$};

  % supports for W=NE2N: A targets (circles)
  \foreach \x/\y in {1/1,2/2,4/2,5/3} {
    \draw[thick] (\x,\y) circle (2.8pt);
  }

  % B targets (squares)
  \foreach \x/\y in {3/1,4/2,6/2,7/3} {
    \draw[thick] (\x,\y) rectangle ++(0.28,0.28);
  }

  % overlap at (4,2)
  \node at (4,2) {\Large $\star$};
  \node[above] at (4,2.35) {\scriptsize odd overlap};

\end{tikzpicture}
\caption{For $W=\word{NE$^2$N}$, anchors separated by $\delta=(2,0)$ have supports that overlap on exactly one data qubit (star).
Thus $\delta=(2,0)$ forbids assigning opposite types to those ancillas. This is the concrete meaning of an ``odd-difference'' displacement.}
\label{fig:toycheck-overlap}
\end{figure}

\subsection{Worked example: computing $\Delta_{\mathrm{odd}}$ and $L(W)$}
\begin{example}[computing $\Delta_{\mathrm{odd}}$ for NE$^2$N]
For $W=\texttt{NE$^2$N}$, we have $P(W)=\{(0,1),(1,2),(3,2),(4,3)\}$. 
Computing all $Q_j-Q_i$ with $j>i$, we get  
$Q_2-Q_1=(1,1)$, $Q_3-Q_1=(3,1)$, 
$Q_4-Q_1=(4,2)$, $Q_3-Q_2=(2,0)$, 
 $Q_4-Q_2=(3,1)$, and $Q_4-Q_3=(1,1)$. 
The differences include $(2,0)$ and $(4,2)$ exactly once each, while $(1,1)$ and $(3,1)$ occur twice.
Thus, the forbidden ancilla displacements are 
\[
\Delta_{\mathrm{odd}}(\texttt{NE$^2$N})=\{(2,0),(4,2)\}.
\]

%To obtain their integer span, 
Take integer combinations, including subtractions
$$
m(2,0)+n(4,2)=(2 m+4 n, 2 n),
$$
and rewrite it as
$$
(2(m+2 n), 2 n)=(2k,2\ell),
$$
where $k=m+2n$ and $\ell=n$. Then all vectors have even $x$ and $y$. So, a basis for $L(W)$ can be
\begin{equation*}
L(\texttt{NE$^2$N})=\Span_{\Z}\{(2,0),(4,2)\}=2\Z\times2\Z.
\end{equation*}
%A basis is not unique, and any pair of integer vectors that generates the same lattice is acceptable. 
From the above basis, $(4,2)-2(2,0)=(4,2)-(4,0)=(0,2)$. Physically, if the ancillas at distance $(4,2)$ must agree, and the ancillas at distance $(2,0)$ must agree, then the ancillas at distance $(0,2)$ must also agree. That is transitivity. 
\end{example}

Figure~\ref{fig:toycheck-overlap} geometrically illustrates the odd-overlap displacement $(2,0)$ in this set.

\begin{remark}[Choosing a basis for $L(W)$]
%Given $\Delta_{\mathrm{odd}}(W)$, the lattice $L(W)=\Span_{\mathbb{Z}}\!\big(\Delta_{\mathrm{odd}}(W)\big)$ is obtained by closing the odd-multiplicity displacements under integer linear combinations.
To extract a convenient basis for a given $\Delta_{\mathrm{odd}}(W)$, one may freely add or subtract integer multiples of generators, since such operations do not change the lattice.
In practice, this reduction typically yields a simple rank-two basis consisting of a
``horizontal'' vector $(d,0)$ and a ``diagonal'' or ``vertical'' vector $(e,f)$, or,
after further simplification, purely horizontal and vertical vectors such as $(d,0)$ and $(0,f)$.
The particular choice of basis is not unique; any pair of independent integer vectors generating the same lattice is equally valid.
\end{remark}

\subsection{Layout coset theorem}
\begin{theorem}[Layout coset theorem]\label{thm:layout}
Fix a direction word $W$ on the infinite checkerboard lattice.
Any layout $\alpha:\Lambda_A\to\{X,Z\}$ compatible with directional-code commutation/scheduling constraints must be constant on cosets of $L(W)$:
\begin{equation}
a-b\in L(W)\ \Rightarrow\ \alpha(a)=\alpha(b).
\end{equation}
Conversely, any layout that is constant on cosets of $L(W)$ yields a translation-invariant CSS commutation pattern for checks generated by $P(W)$.
\end{theorem}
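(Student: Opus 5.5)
The plan is to reduce commutation to an overlap-parity computation that depends only on the anchor displacement, and then read off both directions of the theorem from that computation.

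\textbf{Step 1 (overlap parity).} For anchors $a,b\in\Lambda_A$ with $\delta=b-a$, the checks $S_a^X$ and $S_b^Z$ commute iff $|(a+P(W))\cap(b+P(W))|$ is even. The pairs $(i,j)$ with $a+Q_j=b+Q_i$ are exactly those with $Q_j-Q_i=\delta$. Since $\delta\neq 0$ (the anchors are distinct) and the $Q_j$ are distinct points, the overlap count is
\[
N(\delta)=\#\{(i,j): Q_j-Q_i=\delta\}=\mu(\delta)+\mu(-\delta).
\]
Here $\mu$ is the ordered-pair multiplicity from the definition of $\Delta(W)$, split according to $i<j$ or $i>j$. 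Thus $N(\delta)$ is odd iff exactly one of $\pm\delta$ lies in $\Delta_{\mathrm{odd}}(W)$.

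\textbf{Step 2 (forward direction).} I would first fix the convention that a difference $v$ is taken up to sign. I expect this to be the main subtlety: $\Delta(W)$ is defined with $i<j$, and $v$ and $-v$ could in principle both occur. I will define $\tilde\mu(v)=\mu(v)+\mu(-v)$, and note that $L(W)$ is unchanged whether it is generated by $\{v:\tilde\mu(v)\text{ odd}\}$ or by $\Delta_{\mathrm{odd}}(W)$ together with negatives. I would argue this only up to the generated lattice, and flag that in degenerate cases (both $\pm v$ of odd multiplicity) the true forbidden set is the $\tilde\mu$-odd set. I would then take $L(W)$ to mean the span of that set, which agrees with the paper's definition in all monotone-route examples, where no opposite differences occur.

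With this in hand, compatibility says: whenever $\tilde\mu(b-a)$ is odd, $\alpha(a)=\alpha(b)$, since opposite types would anticommute. Now take $a-b\in L(W)$ and write it as $\sum_k \epsilon_k v_k$ with generators $v_k$ and $\epsilon_k=\pm1$. The partial sums give a chain $b=c_0,c_1,\dots,c_m=a$ in which each step is a forbidden displacement. All the $c_t$ are ancilla sites: each $v\in\Delta(W)$ is a difference of two data-site offsets, so it has even coordinate sum and preserves the sublattice $\Lambda_A$. Applying the equality rule along the chain gives $\alpha(a)=\alpha(b)$.

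\textbf{Step 3 (converse).} Suppose $\alpha$ is constant on cosets of $L(W)$. Take $a\in\Lambda_X$ and $b\in\Lambda_Z$. Then $b-a\notin L(W)$, so in particular $b-a$ is not a generator of $L(W)$. Hence $N(b-a)$ is even by Step 1, and $S_a^X$ and $S_b^Z$ commute. Checks of the same type commute trivially, so $H_XH_Z^\top=0$ and the check set is a valid CSS code.

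Translation invariance follows because the check shapes are all translates of $P(W)$, and translating by any vector in $\Lambda$ that preserves the coset structure maps checks to checks. The scheduling part of the hypothesis is not used beyond commutation; I would remark that finite-torus collision constraints are handled separately in Section~\ref{sec:finitetorus}.
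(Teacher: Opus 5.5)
Your Step~1 is right, and it is more careful than the paper's proof. The paper follows the same outline (overlap parity, then forbidden displacements, then chaining, plus the converse), but asserts $|S_a\cap S_b|\equiv\mu(\delta)\pmod 2$, silently dropping the $\mu(-\delta)$ term you identified.

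The gap is in Step~2. The claim that $\Span_{\Z}\{v:\tilde\mu(v)\ \text{odd}\}$ equals $\Span_{\Z}\big(\Delta_{\mathrm{odd}}(W)\cup(-\Delta_{\mathrm{odd}}(W))\big)$ is false. For $W=\word{NESW}$, $P(W)=\{(0,1),(1,2),(2,1),(1,0)\}$, with $\mu(1,1)=\mu(-1,-1)=\mu(2,0)=\mu(0,-2)=1$ and $\mu(1,-1)=2$.
\begin{itemize}
\item $\Delta_{\mathrm{odd}}(W)$ spans the whole even sublattice $\{x+y\equiv 0\}$, which is a single ancilla coset.
\item The $\tilde\mu$-odd set is $\{\pm(2,0),\pm(0,2)\}$, which spans $2\Z\times2\Z$.
\end{itemize}
Row alternation for this word is the rotated toric code: all $X$/$Z$ checks commute, yet the layout is not constant on cosets of $L(W)$. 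The same happens for $\word{NENESWSW}$ from Table~\ref{tab:search16x8}, where $\mu(1,1)=\mu(-1,-1)=3$. So if commutation is the only constraint, the forward implication as stated is false. Redefining $L(W)$ then proves a different theorem, and ``agrees in monotone examples'' does not cover the general word the statement quantifies over.

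The missing idea is the scheduling half of the hypothesis, which you explicitly set aside. It is exactly what makes the ordered multiplicity $\mu$, not $\tilde\mu$, the right invariant.
\begin{itemize}
\item In a directional code every ancilla performs its $j$th interaction at time step $j$. Take $a\in\Lambda_X$ and $b=a+\delta\in\Lambda_Z$. A shared qubit $a+Q_i=b+Q_j$ is touched first by the $Z$-ancilla iff $j<i$.
\item So the overlap splits into $\mu(\delta)$ qubits where $Z$ acts first and $\mu(-\delta)$ qubits where $X$ acts first.
\item A simultaneous interleaved extraction measures both checks correctly only if each of these counts is even. Back-propagating the final $Z$-measurement of $b$'s ancilla through $a$'s earlier gates on shared qubits leaves $Z^{\mu(-\delta)}$ on $a$'s $|+\rangle$-prepared ancilla, which randomizes the outcome when the exponent is odd. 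The symmetric argument for the $X$-measurement gives $X^{\mu(\delta)}$ on $b$'s $|0\rangle$-prepared ancilla.
\item Hence the forbidden displacements are precisely $\Delta_{\mathrm{odd}}(W)\cup(-\Delta_{\mathrm{odd}}(W))$.
\end{itemize}
With that, your chaining argument, including the parity remark keeping the chain in $\Lambda_A$, yields the forward direction for the paper's $L(W)$.

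Your converse goes through unchanged for the paper's $L(W)$. This is because $\{v:\tilde\mu(v)\ \text{odd}\}\subseteq L(W)$, and when $\mu(\pm\delta)$ are both even the schedule is valid as well.

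For words using only $N$ and $E$ (all of Table~\ref{tab:lattices2}), $\mu(\delta)>0$ forces $\mu(-\delta)=0$. So the commutation-only and scheduling readings coincide there, and that is the regime in which the paper's overlap-parity proof is literally correct.
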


\subsection{Coset structure for the two principal lattices}
The two lattice types that arise repeatedly for the benchmark words are illustrated in Fig.~\ref{fig:cosets}.
Layouts correspond to choosing $X/Z$ labels constant on these cosets.

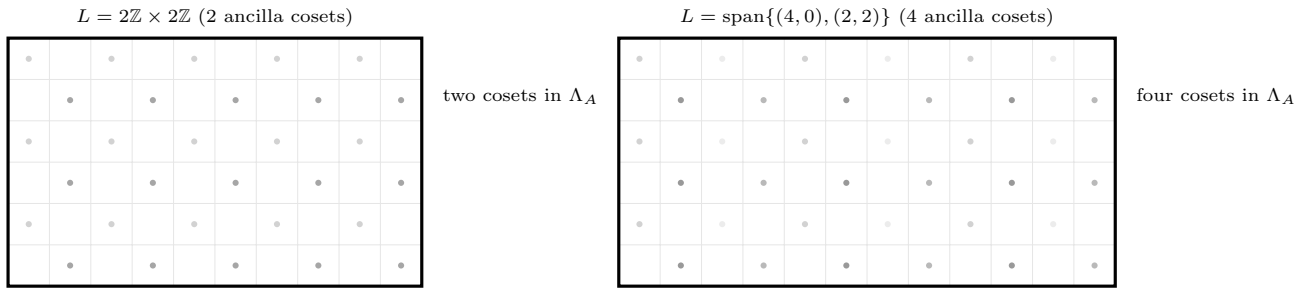
\begin{figure*}[t]
\centering
\begin{tabular}{cc}
\begin{tikzpicture}[scale=0.55]
  \draw[step=1,gray!20,very thin] (0,0) grid (10,6);
  \draw[very thick] (0,0) rectangle (10,6);
  \node at (5,6.5) {\scriptsize $L=2\Z\times2\Z$ (2 ancilla cosets)};
  \foreach \x in {0,...,9}{
    \foreach \y in {0,...,5}{
      \pgfmathtruncatemacro{\p}{mod(\x+\y,2)}
      \ifnum\p=1
        \pgfmathtruncatemacro{\c}{mod(\x,2)}
        \ifnum\c=1
          \fill[gray!70] (\x+0.5,\y+0.5) circle (2.1pt);
        \else
          \fill[gray!35] (\x+0.5,\y+0.5) circle (2.1pt);
        \fi
      \fi
    }
  }
  \node[anchor=west] at (10.3,4.6) {\scriptsize two cosets in $\Lambda_A$};
\end{tikzpicture}
&
\begin{tikzpicture}[scale=0.55]
  \draw[step=1,gray!20,very thin] (0,0) grid (12,6);
  \draw[very thick] (0,0) rectangle (12,6);
  \node at (6,6.5) {\scriptsize $L=\Span\{(4,0),(2,2)\}$ (4 ancilla cosets)};
  \foreach \x in {0,...,11}{
    \foreach \y in {0,...,5}{
      \pgfmathtruncatemacro{\p}{mod(\x+\y,2)}
      \ifnum\p=1
        \pgfmathtruncatemacro{\c}{mod(\x,4)}
        \ifnum\c=1
          \fill[gray!80] (\x+0.5,\y+0.5) circle (2.1pt);
        \fi
        \ifnum\c=3
          \fill[gray!55] (\x+0.5,\y+0.5) circle (2.1pt);
        \fi
        \ifnum\c=0
          \fill[gray!35] (\x+0.5,\y+0.5) circle (2.1pt);
        \fi
        \ifnum\c=2
          \fill[gray!15] (\x+0.5,\y+0.5) circle (2.1pt);
        \fi
      \fi
    }
  }
  \node[anchor=west] at (12.3,4.6) {\scriptsize four cosets in $\Lambda_A$};
\end{tikzpicture}
\end{tabular}
\caption{Coset structure on ancillas for two lattice types that occur frequently in directional-code words.
A layout is any $X/Z$ labeling constant on these cosets (Theorem~\ref{thm:layout}).}
\label{fig:cosets}
\end{figure*}

\subsection{Invariant summary table for representative words}
Table~\ref{tab:lattices2} summarizes $\Delta_{\mathrm{odd}}(W)$ and a convenient basis for $L(W)$ for the five representative words in Fig.~\ref{fig:fivewords}.

\begin{table*}[t]
\centering
\caption{Odd-multiplicity invariants for representative words (computed from $P(W)$).}
\label{tab:lattices2}
\begin{tabular}{@{}llllll@{}}
\toprule
Word $W$ & $w$ & $\Delta_{\mathrm{odd}}(W)$ (generators shown) & basis for $L(W)$ & $[\Z^2\!:\!L(W)]$ & ancilla cosets \\ \midrule
NE$^2$N & 4 & $\{(2,0),(4,2)\}$ & $\{(2,0),(0,2)\}$ & 4 & 2 \\
NE$^3$N & 5 & $\{(4,0),(6,2)\}$ & $\{(4,0),(2,2)\}$ & 8 & 4 \\
N$^2$E$^2$N$^2$ & 6 & $\{(2,0),(4,2),(4,6)\}$ & $\{(2,0),(0,2)\}$ & 4 & 2 \\
N$^2$E$^3$N$^2$ & 7 & $\{(4,0),(6,2),(6,6)\}$ & $\{(4,0),(2,2)\}$ & 8 & 4 \\
NE$^2$NE$^2$N & 7 & $\{(2,2),(6,2),(8,4)\}$ & $\{(4,0),(2,2)\}$ & 8 & 4 \\
\bottomrule
\end{tabular}
\end{table*}

% ==========================================================
\section{Word Symmetries and Canonical Representatives}\label{sec:equiv}
% ==========================================================

When scanning words, it is important to quotient by symmetries that only relabel the lattice.
Let $D_4$ be the dihedral symmetry group of the square, acting on directions by rotations/reflections.
Also consider reversal with direction inversion, and cyclic shifts.

\begin{definition}[Word equivalence]\label{def:equiv}
Two words $W,W'$ are equivalent ($W\sim W'$) if they are related by a composition of:
(i) a dihedral action on $\{N,E,S,W\}$,
(ii) reversal with direction inversion ($N\leftrightarrow S$, $E\leftrightarrow W$), and
(iii) a cyclic shift of letters.
\end{definition}

\begin{proposition}[Equivalence induces code isomorphism]\label{prop:equiv}
If $W\sim W'$, then $P(W')$ is an affine image of $P(W)$ under a lattice automorphism and translation.
On any compatible torus embedding, the corresponding directional codes are isomorphic (qubit relabeling and stabilizer relabeling).
\end{proposition}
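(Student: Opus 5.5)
The plan is to treat the three generators of Definition~\ref{def:equiv} one at a time. For each generator I will compute how the offsets $Q_j$ of Lemma~\ref{lem:offset} transform, and show that $P(W')=\phi(P(W))$ for an affine map $\phi(x)=gx+t$ with $g$ in the dihedral group $D_4\subset GL_2(\Z)$ and $t\in 2\Z^2$. Since equivalence is generated by composing these moves, and compositions of such affine maps are again of this form, the first claim follows. I will then transport this affine map to a relabeling of qubits and checks on the torus.

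\emph{Step 1: dihedral action.} A symmetry $g\in D_4$ acts linearly on $\Z^2$ and permutes $\{N,E,S,W\}$. By linearity $S'_j=gS_j$, hence $Q'_j=gQ_j$ and $P(W')=gP(W)$, so here $t=0$.

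\emph{Step 2: reversal with inversion.} Here $W'=\bar d_w\cdots\bar d_1$ with $\bar d=-d$. A direct computation gives $S'_j=-(S_w-S_{w-j})$, and therefore
\begin{equation*}
Q'_j=S'_{j-1}+S'_j=Q_{w-j+1}-2S_w .
\end{equation*}
Thus $P(W')=P(W)-2S_w$, which is a translation by an even vector.

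\emph{Step 3: cyclic shift.} Take $W'=d_{r+1}\cdots d_w d_1\cdots d_r$. For $j\le w-r$ I expect $Q'_j=Q_{r+j}-2S_r$. For the wrapped letters I expect $Q'_j=Q_{j-w+r}+2(S_w-S_r)$. So $P(W')$ is in general only a \emph{piecewise} translate of $P(W)$: the two pieces are shifted by vectors that differ by $2S_w$.

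I expect Step 3 to be the main obstacle. The affine statement holds in $\Z^2$ only when $S_w=0$, i.e. when the route is closed. Otherwise it holds in $G$ exactly when $2S_w\in L_x\Z\times L_y\Z$. I will therefore make "compatible torus" precise as this condition (together with $g$ mapping the period lattice to the period lattice of the target torus, which swaps $L_x\leftrightarrow L_y$ for quarter-turns). If the intended statement requires a genuinely affine image in $\Z^2$, I would restrict cyclic shifts to closed words, or record a counterexample for open ones.

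\emph{Step 4: code isomorphism.} Let $\phi(x)=gx+t$ on $G$ satisfy $P'=\phi(P)$, where $P'=P(W')$. Elements of $D_4$ preserve the parity of $x+y$, and $t$ is even, so $\phi$ maps $\Lambda_Q\to\Lambda_Q$ and $\Lambda_A\to\Lambda_A$ bijectively. For an anchor $a$,
\begin{equation*}
g(a+P)=(ga-t)+P' .
\end{equation*}
So the qubit relabeling $q\mapsto gq$ sends the check anchored at $a$ to the $W'$-check anchored at $\psi(a)=ga-t\in\Lambda_A$. Transport the layout by $\alpha'=\alpha\circ\psi^{-1}$; it remains constant on cosets of $L(W')=gL(W)$, consistent with Theorem~\ref{thm:layout}. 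Then $H_X'$ and $H_Z'$ equal $H_X$ and $H_Z$ up to row and column permutations. Hence commutation, the ranks, $k$, and all weights, and therefore $d_X,d_Z$, are preserved, and the two codes are isomorphic.
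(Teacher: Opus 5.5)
Your decomposition into the three generators is the same as the paper's (Appendix~\ref{app:equiv}), and your formulas are correct: $Q'_j=gQ_j$ for a dihedral move, $Q'_j=Q_{w-j+1}-2S_w$ for reversal with inversion, and, for a cyclic shift by $r$, $Q'_j=Q_{r+j}-2S_r$ on the first block and $Q'_j=Q_{j-w+r}+2(S_w-S_r)$ on the wrapped block.

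Step~3 is where you depart from the paper, and you are right. The paper justifies cyclic shifts as ``changing the start point along the same cycle, again translating $P(W)$''. That tacitly assumes $S_w=0$, but Definition~\ref{def:equiv} imposes no closedness condition, and none of the paper's benchmark words is closed. So the defect lies in the stated proposition, not in your method.

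You can settle your hedge with an explicit counterexample. Take $W=\word{NE$^2$N}$ and its shift $W'=\word{E$^2$N$^2$}$:
\[
P(W)=\{(0,1),(1,2),(3,2),(4,3)\},\qquad P(W')=\{(1,0),(3,0),(4,1),(4,3)\}.
\]
In $P(W)$ the differences $(1,1)$ and $(3,1)$ each occur twice, while the six differences of $P(W')$ are pairwise distinct. A map $x\mapsto Ax+t$ with $A$ injective sends differences to differences and preserves their multiplicities. Hence $P(W')$ is not an affine image of $P(W)$ under any lattice automorphism.

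The code-level claim fails too. For $W'$, ancillas displaced by $(2,0)$ or $(1,1)$ have supports overlapping in exactly one qubit, and these two vectors generate every ancilla-to-ancilla displacement. So on all sufficiently large tori, $W'$ admits no commuting layout using both check types, whereas $W$ admits row alternation.

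The proposition therefore holds only with your restriction: closed words in $\Z^2$, or tori with $2S_w\in L_x\Z\times L_y\Z$. For open words the latter forces very small tori; for example, \word{NE$^2$NE$^2$N} has $S_w=(4,3)$, so it requires $L_x\mid 8$ and $L_y\mid 6$. The same caveat applies to the cyclic shifts used in the canonicalization of Appendix~\ref{app:canonical}, which would identify the inequivalent words above.

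There are two small corrections to your write-up.

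First, ``only when $S_w=0$'' and ``exactly when $2S_w\in L_x\Z\times L_y\Z$'' overstate what you showed. You proved sufficiency, not necessity: words such as \word{N$^k$} are fixed by every cyclic shift although $S_w\neq 0$. It is the counterexample above, not the piecewise formula, that shows the restriction cannot be dropped.

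Second, when composing moves on a torus, note that $S_w$ becomes $gS_w$, $-S_w$ and $S_w$ under the three moves. The compatibility condition therefore propagates along any chain of moves, provided the period lattice is replaced by its image under $g$.

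Step~4 is fine as written. The side remark $L(W')=gL(W)$ is not needed, since commutation of the transported code already follows from $H'_X,H'_Z$ being row and column permutations of $H_X,H_Z$.
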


\subsection{Canonicalization procedure for symmetry-quotiented searches}
For implementation and numerical searches \cite{rowshan2026dircoode}, it is convenient to map each equivalence class to a single \emph{canonical representative}.
One simple method is: generate all words in the orbit of $W$ under the finite group generated by the operations in Definition~\ref{def:equiv}, then choose the lexicographically smallest expanded-letter string (ties broken deterministically).
A detailed description and complexity bound are given in Appendix~\ref{app:canonical}.

% ==========================================================
\section{Route Realizability and the Inverse Problem}\label{sec:inverse}
% ==========================================================

Directional codes form a strict subclass of translation-invariant pattern CSS codes: not every finite offset set $P$ is realizable as $P(W)$ for a \emph{single} route $W$.
This becomes relevant when comparing directional codes to other quasi-cyclic constructions or when proposing a support pattern first and asking whether it admits a directional extraction schedule.

\subsection{A two-step difference identity}
Let $Q_j=S_{j-1}+S_j$ as in Lemma~\ref{lem:offset}.
Then consecutive offsets satisfy
\begin{equation}
Q_{j+1}-Q_j = d_j+d_{j+1}.
\label{eq:deltaQ}
\end{equation}
Thus, consecutive offsets differ by the sum of two cardinal steps.

Define the finite ``two-step difference alphabet''
\[
\Sigma:=\{u+v: u,v\in\Dir\}=\{(\pm2,0),(0,\pm2),(\pm1,\pm1)\}.
\]

\subsection{Route reconstruction and non-realizability certificates}
\begin{proposition}[Route reconstruction and realizability test]\label{prop:reconstruct}
Let $P\subset\Z^2$ be a multiset of size $w$.
An ordering $Q_1,\dots,Q_w$ of $P$ is realizable by a direction word $W=d_1\cdots d_w$ with $P(W)=P$ (in that order) if and only if:
\begin{enumerate}
\item $Q_1\in\Dir$ (the first offset must be a cardinal step), and
\item for each $j$, $\Delta_j:=Q_{j+1}-Q_j\in\Sigma$, and the recursion
\begin{equation}
d_1 := Q_1,\qquad d_{j+1}:=\Delta_j-d_j
\label{eq:recur}
\end{equation}
produces letters $d_2,\dots,d_w\in\Dir$.
\end{enumerate}
If these conditions hold, the reconstructed word is unique for the chosen ordering.
\end{proposition}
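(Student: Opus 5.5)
The plan is to use Lemma~\ref{lem:offset} to turn the statement into an invertible triangular system linking the letters $d_1,\dots,d_w$ and the ordered offsets $Q_1,\dots,Q_w$. The key point is that the map $(d_1,\dots,d_w)\mapsto(Q_1,\dots,Q_w)$ defined by $Q_j=S_{j-1}+S_j$ is injective on arbitrary integer vectors. Its inverse is exactly the recursion \eqref{eq:recur}. Both directions and the uniqueness claim will then follow by comparing this recursion with the constraint that letters lie in $\Dir$.

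\emph{Necessity.} Suppose $W=d_1\cdots d_w$ realizes the ordering, so that $Q_j=S_{j-1}+S_j$ for all $j$. Then $Q_1=S_0+S_1=d_1\in\Dir$, which gives condition~1. By the two-step identity \eqref{eq:deltaQ}, $\Delta_j=Q_{j+1}-Q_j=d_j+d_{j+1}$, and this lies in $\Sigma$ by the definition of $\Sigma$. Rearranging gives $d_{j+1}=\Delta_j-d_j$, with $d_1=Q_1$. By induction on $j$, the recursion \eqref{eq:recur} therefore reproduces exactly the letters of $W$, and these all lie in $\Dir$. This establishes condition~2.

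\emph{Sufficiency and uniqueness.} Suppose the recursion produces $d_1,\dots,d_w\in\Dir$, and let $W=d_1\cdots d_w$ with partial sums $S_j$. I will show by induction that $S_{j-1}+S_j=Q_j$. The base case is $S_0+S_1=d_1=Q_1$. For the inductive step,
\[
S_j+S_{j+1}=(S_{j-1}+S_j)+d_j+d_{j+1}=Q_j+\Delta_j=Q_{j+1}.
\]
Hence $P(W)$, taken in order, equals $Q_1,\dots,Q_w$. Uniqueness follows from the necessity argument: any realizing word must satisfy $d_1=Q_1$ and $d_{j+1}=Q_{j+1}-Q_j-d_j$, so its letters are forced one at a time.

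The argument is routine. The only point needing care is bookkeeping. The condition $\Delta_j\in\Sigma$ is necessary but not sufficient; for example, $\Delta_j=(0,0)$ is allowed with $u=-v$, and a non-letter $d_j$ would propagate. The real content therefore sits in the requirement that the recursion yields letters in $\Dir$, and I will state explicitly that condition~2 is meant as the conjunction of both requirements. I will also note that $P$ is treated as a multiset to allow repeated offsets, and that the statement concerns realizability in $\Z^2$ before reduction modulo the torus.
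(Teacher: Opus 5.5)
Your proposal is correct and follows essentially the same argument as the paper: necessity from $Q_1=d_1$ and the two-step identity $\Delta_j=d_j+d_{j+1}$, sufficiency by the induction $S_j+S_{j+1}=(S_{j-1}+S_j)+d_j+d_{j+1}=Q_{j+1}$, and uniqueness because the recursion forces each letter in turn. Your side remark is also apt: as the set $\{u+v: u,v\in\Dir\}$, $\Sigma$ contains $(0,0)$, which the paper's explicit list omits and which is needed for necessity when a word backtracks. In any case the condition $\Delta_j\in\Sigma$ is implied by the reconstructed letters lying in $\Dir$, so all the content sits in that requirement.
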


\begin{example}[A simple non-realizable offset set]\label{ex:nonrealizable}
Consider $P=\{(1,2),(3,2),(5,2)\}$.
Every point has odd parity ($x+y$ odd), so parity alone does not forbid it, but $P\cap\Dir=\emptyset$ so no ordering can satisfy $Q_1\in\Dir$.
Hence $P$ cannot be realized by a single route, even though it is a plausible translation-invariant check pattern. Fig.~\ref{fig:nonrealizable} illustrates this example. 
\end{example}

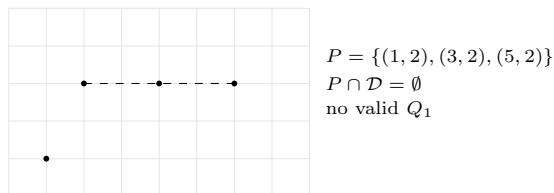
\begin{figure}[t]
\centering
\begin{tikzpicture}[scale=0.50]
  \draw[step=1,gray!20,very thin] (-1,-1) grid (7,4);
  \fill (0,0) circle (2.2pt);
  \foreach \x/\y in {1/2,3/2,5/2}{
    \fill (\x,\y) circle (2.2pt);
  }
  \node[anchor=west] at (7.2,2.7) {\scriptsize $P=\{(1,2),(3,2),(5,2)\}$};
  \node[anchor=west] at (7.2,2.0) {\scriptsize $P\cap\Dir=\emptyset$};
  \node[anchor=west] at (7.2,1.3) {\scriptsize no valid $Q_1$};
  \draw[dashed] (1,2)--(3,2)--(5,2);
\end{tikzpicture}
\caption{A simple non-realizable offset set: it ``looks geometric'' but cannot be ordered to start with a cardinal first offset $Q_1\in\Dir$, as required by Proposition~\ref{prop:reconstruct}.}
\label{fig:nonrealizable}
\end{figure}

% ==========================================================
\section{Finite-Torus Embeddings and Quasi-Cyclic Structure}\label{sec:finitetorus}
% ==========================================================

Directional codes are implemented on finite tori (or finite patches) rather than on $\Z^2$.
Two finite-size effects matter in practice:
(i) wrap-around collisions when the torus is too small compared to $P(W)$, and
(ii) boundary-condition sensitivity of $\rank(H_X),\rank(H_Z)$, hence of $k$.

\subsection{Wrap-around collisions and a conservative admissibility filter}
On a torus, offsets are reduced modulo $(L_x,L_y)$.
If distinct offsets or offset differences collapse modulo the torus, a schedule that is collision-free on the infinite lattice can break. Fig.~\ref{fig:wrap} sketches the issue.

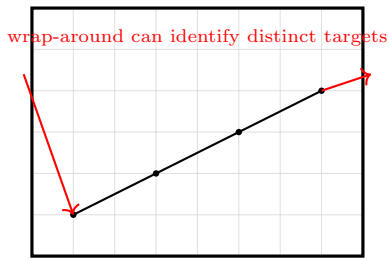
\begin{figure}[t]
\centering
\begin{tikzpicture}[scale=0.55]
  \draw[step=1,gray!25,very thin] (0,0) grid (8,6);
  \draw[very thick] (0,0) rectangle (8,6);
  \fill (1,1) circle (2.2pt);
  \fill (3,2) circle (2.2pt);
  \fill (5,3) circle (2.2pt);
  \fill (7,4) circle (2.2pt);
  \draw[thick] (1,1)--(3,2)--(5,3)--(7,4);
  \draw[thick,->,red] (7,4)--(8.2,4.4);
  \draw[thick,->,red] (-0.2,4.4)--(1,1);
  \node[red] at (4.0,5.3) {\scriptsize wrap-around can identify distinct targets};
\end{tikzpicture}
\caption{Schematic: on small tori, a route may wrap and identify distinct intended targets.}
\label{fig:wrap}
\end{figure}

Let $P(W)=\{Q_1,\dots,Q_w\}$ and define the difference set
\begin{equation}
D(W)=\{\pm(Q_j-Q_i): 1\le i<j\le w\}.
\end{equation}

\begin{proposition}[Rectangle admissibility bound]\label{prop:rectadm}
If
\begin{align}
L_x &> \max_{v\in D(W)\cup (D(W)\pm D(W))} |v_x|,\\
L_y &> \max_{v\in D(W)\cup (D(W)\pm D(W))} |v_y|,
\end{align}
then no nonzero element of $D(W)$ or $D(W)\pm D(W)$ vanishes modulo $(L_x,L_y)$.
In particular, a conservative family of wrap-around collision obstructions is avoided.
\end{proposition}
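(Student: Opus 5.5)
The plan is to work coordinate by coordinate and use the standard fact that an integer vanishes modulo $L$ only if it is a multiple of $L$. An element $v\in\Z^2$ vanishes modulo $(L_x,L_y)$ exactly when $L_x\mid v_x$ and $L_y\mid v_y$. So it suffices to show that, for every nonzero $v$ in $D(W)\cup(D(W)\pm D(W))$, at least one of these divisibilities fails.

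First, fix a nonzero $v$ in that set. Write $M_x$ and $M_y$ for the two maxima appearing in the hypotheses, so that $|v_x|\le M_x<L_x$ and $|v_y|\le M_y<L_y$.

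Second, since $v\neq 0$, at least one coordinate is nonzero; say $v_x\neq 0$, the other case being symmetric. Then $0<|v_x|<L_x$, so $v_x$ cannot be a multiple of $L_x$. Hence $v_x\not\equiv 0\pmod{L_x}$, and $v$ does not vanish in $G=\Z_{L_x}\times\Z_{L_y}$. The argument is identical for $D(W)$ alone, since $D(W)$ is contained in the union over which the maxima are taken.

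Third, I will unpack the ``in particular'' clause. Two distinct offsets $Q_i\neq Q_j$ collide on the torus iff $Q_j-Q_i\equiv 0$. So distinct offsets stay distinct, and the check weight $w$ is preserved whenever the elements of $P(W)$ are distinct.

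Similarly, an identification of two differences, $Q_j-Q_i\equiv Q_l-Q_k$ with the two differences unequal in $\Z^2$, means a nonzero element of $D(W)-D(W)$ vanishes. This is excluded, so the multiplicities $\mu(v)$ computed on $\Z^2$ agree with those on the torus. Consequently $\Delta_{\mathrm{odd}}$, and hence the overlap-parity analysis behind Theorem~\ref{thm:layout}, transfers faithfully to the finite embedding.

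The only real obstacle is interpretive rather than technical. One must state precisely which collisions form the ``conservative family'': coincidences of offsets and of pairwise differences, i.e., relations of length at most four among the $Q_j$. One must also note that the bound is sufficient but not necessary, since the coordinate-wise criterion ignores cases where only one coordinate is large. I would phrase the conclusion exactly as this list of excluded coincidences, making no claim about longer relations.
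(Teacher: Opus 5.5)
Your argument is correct and is essentially the paper's proof: vanishing modulo $(L_x,L_y)$ means $L_x\mid v_x$ and $L_y\mid v_y$, and a nonzero coordinate strictly smaller in absolute value than its period cannot be a multiple of it. Your unpacking of the ``in particular'' clause goes beyond what the paper spells out but is consistent with it: distinct offsets stay distinct, and difference multiplicities, hence $\Delta_{\mathrm{odd}}$, are preserved on the torus.
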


\subsection{Polynomial encoding of translation-invariant patterns}
Directional codes repeat the same finite support pattern across the lattice by translation, so they are quasi-cyclic in the standard sense.
Given a finite offset pattern $P\subset\Z^2$, define its polynomial
\begin{equation}
h_P(x,y)=\sum_{(a,b)\in P} x^a y^b
\in \F_2[x^{\pm1},y^{\pm1}]/(x^{L_x}-1,y^{L_y}-1).
\end{equation}
Translating $P$ by $(u,v)$ corresponds to multiplying by $x^uy^v$.
Layouts and sublattice restrictions can be encoded by small polynomial matrices whose block-circulant expansion yields $H_X,H_Z$.

% ==========================================================
\section{Quasi-cyclic rank prediction and boundary-condition sensitivity}\label{sec:qc-rank}
% ==========================================================
% Suggested placement:
% Insert this section immediately after your current quasi-cyclic/pattern-polynomial discussion
% (i.e., after Sec.~\ref{sec:finitetorus} or after your existing ``Quasi-cyclic description'' section).
% It upgrades the QC viewpoint into a usable (and provable) $k$-predictor.

Directional codes are translation-invariant in the sense that every check is a translate (shifted copy) of the same finite pattern $P(W)$.
This makes the check matrices block-circulant, so questions about $\rank(H_X),\rank(H_Z)$ (hence $k$) can be reduced to algebra in a small quotient ring.
The main takeaway is that \emph{boundary-condition sensitivity of $k$ is not mysterious}: it corresponds to whether certain low-degree polynomials divide the torus relations, a fact that becomes explicit in the quasi-cyclic (QC) representation.

\subsection{Coarse coordinates and a block-circulant normal form}\label{subsec:coarse-qc}

We keep the checkerboard torus $G=\Z_{L_x}\times\Z_{L_y}$ with $L_x,L_y$ even.
Translations by \emph{even} vectors preserve the data/ancilla bipartition.
Let
\[
G_0:=\Z_{L_x/2}\times \Z_{L_y/2},
\]
which parametrizes even translations via $(i,j)\mapsto (2i,2j)$.
Each parity class in $G$ (e.g.\ data sites with $(x,y)\equiv(0,0)\!\!\pmod 2$) is a torsor for $G_0$.

We now fix four base sites:
\[
a_X:=(1,0)\in\Lambda_A,\qquad a_Z:=(0,1)\in\Lambda_A,\qquad
q_0:=(0,0)\in\Lambda_Q,\qquad q_1:=(1,1)\in\Lambda_Q.
\]
Then every $X$-ancilla under row alternation ($y$ even) can be written uniquely as $a_X+(2i,2j)$ with $(i,j)\in G_0$, and every $Z$-ancilla ($y$ odd) as $a_Z+(2i,2j)$.
Likewise, every data qubit lies either in the $q_0$-coset or the $q_1$-coset:
\[
q_0+(2i,2j)\quad\text{or}\quad q_1+(2i,2j),\qquad (i,j)\in G_0.
\]

\begin{definition}[QC ring]
Let
\[
R:=\F_2[u^{\pm1},v^{\pm1}]/(u^{L_x/2}-1,\ v^{L_y/2}-1).
\]
We identify functions on $G_0$ with elements of $R$ in the usual way:
the monomial $u^i v^j$ denotes the indicator at $(i,j)\in G_0$ (extended periodically).
\end{definition}

\begin{figure}[t]
\centering
\begin{tikzpicture}[scale=0.95, every node/.style={font=\scriptsize}]
  % blocks
  \node[draw,rounded corners,minimum width=3.1cm,minimum height=1.2cm,align=center] (rowX) at (0,0) {$X$-ancillas\\(row alternation)\\one $G_0$-coset};
  \node[draw,rounded corners,minimum width=3.1cm,minimum height=1.2cm,align=center] (col0) at (5.0,0.8) {data coset\\$q_0+(2G_0)$};
  \node[draw,rounded corners,minimum width=3.1cm,minimum height=1.2cm,align=center] (col1) at (5.0,-0.8) {data coset\\$q_1+(2G_0)$};

  \draw[-{Latex[length=2.8mm]},thick] (rowX.east) -- node[above] {$h_0(u,v)$} (col0.west);
  \draw[-{Latex[length=2.8mm]},thick] (rowX.east) -- node[below] {$h_1(u,v)$} (col1.west);

  \node[align=left] at (0,-1.6) {\shortstack[c]{$H_X$ becomes a $1\times 2$ polynomial vector\\ over $R$.}};
  \node[align=left] at (5.0,-2.0) {\shortstack[l]{ $\text{full }H_X\text{ is obtained by expanding}$\\ $\text{these blocks over all }(i,j)\in G_0.$ }};
\end{tikzpicture}
\caption{Row alternation makes $\Lambda_X$ a single $G_0$-coset and $\Lambda_Q$ the disjoint union of two $G_0$-cosets.
Consequently, $H_X$ has a block-circulant form generated by a \emph{small} $1\times 2$ polynomial vector $(h_0,h_1)\in R^{1\times 2}$.}
\label{fig:qc-blockform}
\end{figure}
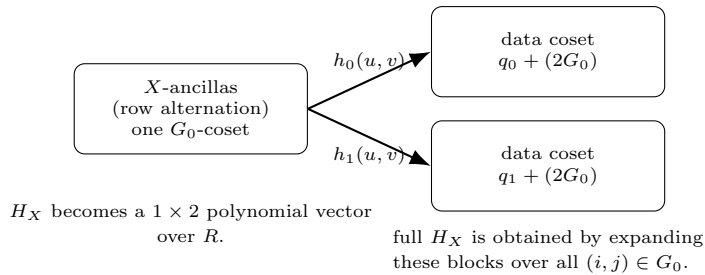

\subsection{From a word to a polynomial check vector}\label{subsec:word-to-qc-vector}

Fix a word $W$ and its support pattern $P(W)\subset\Z^2$ (Definition in Sec.~\ref{sec:construction}).
For each offset $Q=(Q_x,Q_y)\in P(W)$, consider the translate $a_X+Q$.
Since $a_X\in\Lambda_A$ and $Q$ has odd parity, $a_X+Q\in\Lambda_Q$.
Moreover, $a_X+Q$ lands in exactly one of the two data cosets $q_0+(2G_0)$ or $q_1+(2G_0)$, depending on $(Q_x,Q_y)\pmod 2$.

\begin{definition}[Row-alternating QC check vector]\label{def:qc-check-vector}
For each $Q\in P(W)$, define $(\sigma(Q),\delta(Q))$ as follows:
\begin{itemize}
\item $\sigma(Q)\in\{0,1\}$ is the unique index such that $a_X+Q\in q_{\sigma(Q)}+(2G_0)$;
\item $\delta(Q)=(\delta_x(Q),\delta_y(Q))\in G_0$ is defined by
\[
a_X+Q = q_{\sigma(Q)} + (2\delta_x(Q),\,2\delta_y(Q))\quad\text{in }G.
\]
\end{itemize}
Define two polynomials in $R$ by
\[
h_\sigma(u,v):=\sum_{Q\in P(W):\ \sigma(Q)=\sigma} u^{\delta_x(Q)}v^{\delta_y(Q)}\in R,\qquad \sigma\in\{0,1\}.
\]
We call $(h_0,h_1)\in R^{1\times 2}$ the \emph{QC check vector} of $(W,\text{row alternation})$ relative to the chosen bases.
\end{definition}

\begin{proposition}[Block-circulant expansion]\label{prop:block-circulant-expansion}
Under row alternation, the $X$-check matrix $H_X$ is the block-circulant matrix obtained by expanding the $1\times 2$ polynomial vector $(h_0,h_1)$ over $G_0$ as in Fig.~\ref{fig:qc-blockform}.
Equivalently, identifying row coefficients on $\Lambda_X$ with $f\in R$, the resulting syndrome-on-data vector is
\[
f\cdot (h_0,h_1) = (f h_0,\ f h_1)\in R\oplus R,
\]
where multiplication is in $R$.
An analogous statement holds for $H_Z$ with $(h_0,h_1)$ replaced by the corresponding $(g_0,g_1)$ computed from the base ancilla $a_Z$.
\end{proposition}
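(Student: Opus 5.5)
The plan is to identify both sides of $H_X$ with group-algebra objects and check that the identification respects translations. First, the index sets. By the coarse-coordinate discussion, row alternation makes $\Lambda_X=\{a_X+(2i,2j):(i,j)\in G_0\}$, and this parametrization is bijective because $(i,j)\mapsto(2i,2j)$ is injective modulo $(L_x,L_y)$. Likewise $\Lambda_Q$ is the disjoint union of $q_0+2G_0$ and $q_1+2G_0$. Since $\Lambda_Q$ consists of the sites with $x+y$ even, the parity classes $(0,0)$ and $(1,1)$ mod $2$ exhaust it. We therefore have $\F_2^{\Lambda_X}\cong R$ and $\F_2^{\Lambda_Q}\cong R\oplus R$, with the indicator of $a_X+(2i,2j)$ sent to $u^iv^j$ and the indicator of $q_\sigma+(2i,2j)$ sent to $u^iv^j$ in the $\sigma$-th summand.

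Second, I compute a single row. Take the $X$-check anchored at $a=a_X+(2i,2j)$. By Lemma~\ref{lem:offset} its support is
\[
a+P(W)=\{a_X+Q+(2i,2j):Q\in P(W)\}.
\]
By Definition~\ref{def:qc-check-vector}, $a_X+Q=q_{\sigma(Q)}+2\delta(Q)$. Hence each support site is $q_{\sigma(Q)}+2(\delta(Q)+(i,j))$, which corresponds to $u^iv^j\,u^{\delta_x(Q)}v^{\delta_y(Q)}$ in component $\sigma(Q)$. Summing over $Q$ gives exactly $(u^iv^jh_0,\;u^iv^jh_1)$.

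There is one point to check here: we need a set indicator, not a multiset sum. If two offsets $Q\neq Q'$ land on the same data site modulo the torus, they cancel mod $2$ in the polynomial sum. I will adopt the paper's pattern-code convention that the row of $H_X$ is the $\F_2$-sum of the indicators of $a+Q_j$. Under the admissibility hypotheses of Proposition~\ref{prop:rectadm}, no such collisions occur, so the two notions agree. I expect this bookkeeping, and stating clearly which convention is meant, to be the main subtlety rather than any real difficulty.

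Third, I extend by linearity. A general combination of rows is given by coefficients $f=\sum f_{ij}u^iv^j\in R$, and the resulting vector on data is $\sum_{ij} f_{ij}(u^iv^jh_0,u^iv^jh_1)=(fh_0,fh_1)$, with reduction modulo $u^{L_x/2}-1$ and $v^{L_y/2}-1$ matching the torus wraparound. This is exactly the statement that $H_X$ is the block-circulant expansion $[\,C(h_0)\ C(h_1)\,]$, where $C(h)$ is the $G_0$-circulant of $h$.

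The $Z$ case is identical. Replace $a_X$ by $a_Z=(0,1)$, so $\Lambda_Z=a_Z+2G_0$ for odd $y$, and define $(g_0,g_1)$ by $a_Z+Q=q_{\sigma'(Q)}+2\delta'(Q)$; the same computation then applies verbatim.
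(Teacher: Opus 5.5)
Your proposal is correct and follows the paper's own argument: the row anchored at $a_X+(2i,2j)$ is the even translate of the base row, which on $G_0$ is multiplication by $u^iv^j$. Grouping the offset contributions $u^{\delta_x(Q)}v^{\delta_y(Q)}$ by $\sigma(Q)$ gives $(u^iv^jh_0,\,u^iv^jh_1)$, and linearity extends this to $(fh_0,fh_1)$. Your explicit check of the index-set bijections and your remark about mod-$2$ cancellation when offsets collide on the torus are extra care that the paper's proof skips, not a different route.
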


\subsection{Dependencies as an annihilator module and a computable $k$-formula}\label{subsec:annihilator-k}

Because we use \emph{every} ancilla as a check (either $X$ or $Z$), we have $m_X+m_Z=|\Lambda_A|=n$ rows total.
Thus, in this model the logical dimension $k$ is exactly the total number of independent stabilizer dependencies.

\begin{proposition}[Logical dimension equals number of stabilizer dependencies]\label{prop:k-equals-deps}
For any commuting CSS instance built on the checkerboard torus with one check per ancilla site,
\[
k = \dim\Ker(H_X^\top) + \dim\Ker(H_Z^\top).
\]
\end{proposition}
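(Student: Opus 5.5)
This follows from the CSS dimension formula by rank--nullity applied to the transposed check matrices. Let $m_X=|\Lambda_X|$ and $m_Z=|\Lambda_Z|$ be the numbers of rows of $H_X$ and $H_Z$. Because every ancilla hosts exactly one check and the layout partitions $\Lambda_A$, we have $m_X+m_Z=|\Lambda_A|$. The checkerboard split gives $|\Lambda_A|=|\Lambda_Q|=n=\tfrac12 L_xL_y$, since $L_x,L_y$ are even. This counting identity is the only place where the geometry enters.

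Next apply rank--nullity to $H_X^\top:\F_2^{m_X}\to\F_2^{n}$. It gives $\dim\Ker(H_X^\top)=m_X-\rank(H_X^\top)=m_X-\rank(H_X)$. In the same way, $\dim\Ker(H_Z^\top)=m_Z-\rank(H_Z)$. Adding the two equations yields
\[
\dim\Ker(H_X^\top)+\dim\Ker(H_Z^\top)=(m_X+m_Z)-\rank(H_X)-\rank(H_Z)=n-\rank(H_X)-\rank(H_Z).
\]
The right-hand side is exactly $k$ as defined in Sec.~\ref{sec:construction}.

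The commutation hypothesis $H_XH_Z^\top=0$ is needed only so that the instance is a valid CSS code and the formula for $k$ applies. It is not used in the counting itself. The interpretation is as follows: a vector in $\Ker(H_X^\top)$ is a set of $X$-checks whose product is the identity, that is, a stabilizer dependency, and likewise for $Z$. So the statement says that $k$ equals the total number of independent dependencies.

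There is no real obstacle here. The one point to state carefully is that the row count equals $n$: the identity depends on using one check per ancilla site together with the balanced checkerboard. In a model with some unused ancillas the formula would pick up a correction term $n-m_X-m_Z$.
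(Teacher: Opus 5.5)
Your proof is correct and follows the paper's own argument. Both use $m_X+m_Z=|\Lambda_A|=n$ from the one-check-per-ancilla checkerboard model and rank--nullity $\dim\Ker(H_X^\top)=m_X-\rank(H_X)$ (likewise for $Z$), then substitute into $k=n-\rank(H_X)-\rank(H_Z)$; your remark about the correction term $n-m_X-m_Z$ when ancillas are unused correctly pinpoints where the hypothesis enters.
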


In the QC normal form, these dependency dimensions become small-ring kernel dimensions.

\begin{theorem}[QC dependency module for row alternation]\label{thm:qc-dependency-module}
Let $(h_0,h_1)\in R^{1\times 2}$ be the QC check vector for $H_X$ (Definition~\ref{def:qc-check-vector}).
Then
\[
\Ker(H_X^\top)\ \cong\ \{f\in R:\ f h_0=0\ \text{and}\ f h_1=0\}.
\]
In particular,
\[
\dim\Ker(H_X^\top)=\dim_{\F_2}\Ann_R(h_0,h_1),
\]
where $\Ann_R(h_0,h_1)=\{f\in R:\ f h_0=f h_1=0\}$ is the annihilator module.
An analogous statement holds for $H_Z$ with its QC vector $(g_0,g_1)$.
\end{theorem}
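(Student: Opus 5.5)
We identify $\Ker(H_X^\top)$ with the set of row combinations of $H_X$ that sum to zero, and translate this into the ring $R$ using Proposition~\ref{prop:block-circulant-expansion}.

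\emph{Step 1: rows as elements of $R$.} Under row alternation, $\Lambda_X=\{a_X+(2i,2j):(i,j)\in G_0\}$, so a vector $c\in\F_2^{m_X}$ on the rows of $H_X$ is the same as a function on $G_0$. We identify it with $f_c=\sum_{(i,j)} c_{(i,j)}u^iv^j\in R$. This map $c\mapsto f_c$ is an $\F_2$-linear bijection $\F_2^{m_X}\to R$.

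\emph{Step 2: columns as elements of $R\oplus R$.} Similarly, we identify $\F_2^{n}$ with $R\oplus R$ via the two data cosets $q_0+(2G_0)$ and $q_1+(2G_0)$.

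\emph{Step 3: $c^\top H_X$ is multiplication.} The row indexed by $a_X+(2i,2j)$ has support $a_X+(2i,2j)+P(W)$. By Definition~\ref{def:qc-check-vector}, each offset $Q$ contributes the element $u^{i+\delta_x(Q)}v^{j+\delta_y(Q)}$ in coordinate $\sigma(Q)$. So this row corresponds to $u^iv^j\cdot(h_0,h_1)$. Summing over the rows with $c_{(i,j)}=1$ and using linearity gives
\[
c^\top H_X\ \leftrightarrow\ (f_ch_0,\ f_ch_1).
\]
This is exactly Proposition~\ref{prop:block-circulant-expansion}.

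\emph{Step 4: conclusion.} By Step 3, $c\in\Ker(H_X^\top)$ if and only if $f_ch_0=0$ and $f_ch_1=0$. So the bijection of Step 1 restricts to an $\F_2$-linear isomorphism
\[
\Ker(H_X^\top)\ \cong\ \Ann_R(h_0,h_1).
\]
Taking $\F_2$-dimensions gives the dimension formula. For $H_Z$ the same argument applies with base ancilla $a_Z$ and the vector $(g_0,g_1)$.

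\emph{Main point of care.} The only delicate issue is well-definedness modulo the torus. One must check that $\delta(Q)$ is taken in $G_0$ consistently, i.e.\ that reducing $a_X+Q$ modulo $(L_x,L_y)$ matches reducing exponents modulo $(L_x/2,L_y/2)$. This holds because $L_x$ and $L_y$ are even, so $(2i,2j)\bmod(L_x,L_y)$ corresponds to $(i,j)\bmod(L_x/2,L_y/2)$.

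A second issue is that two offsets may collide after reduction modulo the torus. In that case their coefficients cancel mod $2$ in both $H_X$ and $h_\sigma$, provided $H_X$ is defined over $\F_2$ with the same convention, so the two sides still agree. Apart from this, the argument is a direct unpacking of the block-circulant identification.
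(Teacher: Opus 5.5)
Your proposal is correct and follows essentially the same route as the paper's proof. The paper also obtains $c^\top H_X \leftrightarrow (f_c h_0, f_c h_1)$ from Proposition~\ref{prop:block-circulant-expansion} and then applies the standard correspondence between left-kernel vectors of $H_X$ and $\F_2$-relations among its rows. Your Steps 1--3 simply re-derive that block-circulant identification explicitly. Your remarks on reducing exponents modulo $(L_x/2,L_y/2)$ and on mod-$2$ cancellation of colliding offsets spell out conventions the paper leaves implicit.
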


\subsection{Closed-form prediction of the $k$-collapse for $W=\word{NE$^2$NE$^2$N}$}\label{subsec:kcollapse-proof}

We now apply the framework to the case-study word
\[
W=\word{NE$^2$NE$^2$N},\qquad
P(W)=\{(0,1),(1,2),(3,2),(4,3),(5,4),(7,4),(8,5)\}
\]
as drawn in Fig.~\ref{fig:fivewords}.
For row alternation, the QC check vector $(h_0,h_1)$ can be computed explicitly.

\begin{example}[QC check vector for $\word{NE$^2$NE$^2$N}$ under row alternation]\label{ex:qc-vector-case}
Using the bases $(a_X,q_0,q_1)$ from Sec.~\ref{subsec:coarse-qc}, one finds
\[
h_0(u,v)=u v + u^2 v + u^3 v^2 + u^4 v^2,\qquad
h_1(u,v)=1 + u^2 v + u^4 v^2,
\]
in the ring $R=\F_2[u^{\pm1},v^{\pm1}]/(u^{L_x/2}-1,\ v^{L_y/2}-1)$.
\end{example}

\begin{proof}
We list the offsets $Q\in P(W)$ and determine whether $a_X+Q$ lands in the $q_0$ or $q_1$ data coset.
A direct parity check shows
\[
a_X+(1,2),\ a_X+(3,2),\ a_X+(5,4),\ a_X+(7,4)\in q_0+(2G_0),
\]
\[
a_X+(0,1),\ a_X+(4,3),\ a_X+(8,5)\in q_1+(2G_0).
\]
Writing each target uniquely as $q_\sigma+(2\delta_x,2\delta_y)$ produces the exponents
\[
(1,2)\mapsto u^1v^1,\ (3,2)\mapsto u^2v^1,\ (5,4)\mapsto u^3v^2,\ (7,4)\mapsto u^4v^2
\]
for $h_0$, and
\[
(0,1)\mapsto 1,\ (4,3)\mapsto u^2v^1,\ (8,5)\mapsto u^4v^2
\]
for $h_1$, yielding the stated polynomials.
\end{proof}

The key simplification is that $h_0$ has \emph{paired} $v$-exponents, so it is annihilated by the ``sum over all $u$'' idempotent.

\begin{definition}[Horizontal sum idempotent]\label{def:Su}
Let
\[
S_u := 1+u+u^2+\cdots+u^{L_x/2-1}\ \in\ R.
\]
Then $u^t S_u=S_u$ for all $t$, so $S_u$ is invariant under horizontal shifts.
\end{definition}

\begin{lemma}[A universal annihilation of $h_0$]\label{lem:Su-kills-h0}
For the case-study word, $S_u\,h_0=0$ in $R$ for every $(L_x,L_y)$.
Moreover,
\[
S_u\,h_1 = S_u\,(1+v+v^2).
\]
\end{lemma}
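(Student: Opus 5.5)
The plan is to compute directly in $R$, using only the absorption identity $u^tS_u=S_u$ from Definition~\ref{def:Su}. This identity means that multiplying any monomial $u^a v^b$ by $S_u$ yields $S_u v^b$. Extending linearly, multiplication by $S_u$ is the ring map that sets $u=1$, with the result multiplied back by $S_u$. Concretely, for any $p(u,v)\in R$ we have $S_u\,p(u,v)=S_u\,p(1,v)$. This holds in every quotient $R$ with no assumption on $L_x,L_y$: the relation $u^{L_x/2}=1$ is exactly what makes $u S_u=S_u$.

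With this in hand, the first claim follows by substituting $u=1$ into $h_0$ from Example~\ref{ex:qc-vector-case}. We get
\[
h_0(1,v)=v+v+v^2+v^2=0,
\]
because the four monomials pair up with equal $v$-exponents: $uv$ and $u^2v$, then $u^3v^2$ and $u^4v^2$. Over $\F_2$ each pair cancels, so $S_u h_0=S_u\cdot 0=0$. This is the ``paired $v$-exponents'' observation stated just before Definition~\ref{def:Su}.

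For the second claim we substitute $u=1$ into $h_1=1+u^2v+u^4v^2$. This gives $h_1(1,v)=1+v+v^2$, and therefore $S_u h_1=S_u(1+v+v^2)$.

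No step is a real obstacle. The only point needing care is that the reduction $S_u p(u,v)=S_u p(1,v)$ is valid in $R$ itself, not just for polynomials. I would justify it monomial by monomial using $u^aS_u=S_u$, which also covers negative exponents since $u$ is a unit with $u^{-1}=u^{L_x/2-1}$. I would also note that $v$-exponents are untouched, so the reduction modulo $v^{L_y/2}-1$ plays no role. It is worth recording, for later use, that when $L_y/2\le 2$ the polynomial $1+v+v^2$ may simplify further in $R$, but the stated identity holds regardless.
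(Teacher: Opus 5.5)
Your proposal is correct and matches the paper's proof: both use the absorption identity $u^tS_u=S_u$ to replace every power of $u$ by $1$, so the monomials of $h_0$ cancel in pairs ($v+v+v^2+v^2=0$) and $h_1$ collapses to $1+v+v^2$. Phrasing this as $S_u\,p(u,v)=S_u\,p(1,v)$, and checking that this is well defined in $R$, is a slightly more explicit version of the same computation.
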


Therefore, $X$-check dependencies are exactly controlled by the 1D cyclotomic factor $1+v+v^2$ and its interaction with the vertical period $v^{L_y/2}-1$.

\begin{theorem}[Exact $k$-collapse criterion for row alternation]\label{thm:kcollapse-criterion}
Let $W=\word{NE$^2$NE$^2$N}$ and use the row-alternating layout on $G=\Z_{L_x}\times\Z_{L_y}$.
Assume $L_x,L_y$ are even so the checkerboard model is well-defined.
Then
\[
\dim\Ker(H_X^\top)=\deg\gcd\big(1+v+v^2,\ v^{L_y/2}-1\big),
\qquad
\dim\Ker(H_Z^\top)=\deg\gcd\big(1+v+v^2,\ v^{L_y/2}-1\big),
\]
and hence
\[
k \;=\; 2\,\deg\gcd\big(1+v+v^2,\ v^{L_y/2}-1\big).
\]
In particular, $k=4$ if $3\mid (L_y/2)$ (equivalently $6\mid L_y$), and $k=0$ otherwise.
\end{theorem}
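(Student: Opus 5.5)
The plan is to compute $\Ann_R(h_0,h_1)$ explicitly, using Theorem~\ref{thm:qc-dependency-module} to reduce $\dim\Ker(H_X^\top)$ to a question in $R$, and then to get $k$ from Proposition~\ref{prop:k-equals-deps}. Throughout, write $n_0=L_x/2$, $m=L_y/2$ and introduce the monomial $w:=u^2v\in R$.

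The first step is to factor the check vector of Example~\ref{ex:qc-vector-case} in terms of $w$:
\[
h_1=1+w+w^2,\qquad h_0=uv(1+u)+u^3v^2(1+u)=uv(1+u)(1+w).
\]
Since $uv$ is a unit, a polynomial $f$ lies in $\Ann_R(h_0,h_1)$ iff $f(1+w+w^2)=0$ and $f(1+u)(1+w)=0$.

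Next I will pass to $g:=f(1+w)$. The second condition says $g(1+u)=0$. In $R$ this means $g$ is $u$-invariant, i.e.\ $g=S_u\,a(v)$ with $a\in\F_2[v]/(v^m-1)$; this matches Definition~\ref{def:Su} and the mechanism of Lemma~\ref{lem:Su-kills-h0}. Multiplying the first condition by $1+w$ gives $g(1+w+w^2)=0$. Because $u^tS_u=S_u$, this becomes $S_u\,a(v)(1+v+v^2)=0$, i.e.\ $a(1+v+v^2)\equiv0 \pmod{v^m-1}$. The solution space for $a$ has dimension $\deg\gcd(1+v+v^2,v^m-1)$: it is generated by $(v^m-1)/\gcd$.

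It remains to show that $f\mapsto a$ is a linear bijection from $\Ann_R(h_0,h_1)$ onto this solution space. The intended route is the following.
\begin{itemize}
\item \emph{Surjectivity.} Given admissible $a$, exhibit $f=S_u\,b(v)$ with $b(1+v)=a$ inside the $v$-ring. This works because $(1+v)$ is coprime to $1+v+v^2$, so $a$, which is divisible by $(v^m-1)/\gcd$, can be divided by $1+v$ modulo $v^m-1$ on the relevant component. One then checks that such $f$ satisfies both conditions.
\item \emph{Injectivity.} Show that $f(1+w)=0$ together with $f(1+w+w^2)=0$ forces $f=0$. This follows since $(1+w)+(1+w+w^2)=w+w^2$, hence $f\,w(1+w)=0$, hence $fw^2=0$; as $w$ is a unit, $f=0$.
\end{itemize}
Injectivity is the clean part. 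I expect the main obstacle to be surjectivity together with verifying that every annihilator element is $u$-invariant, in the non-semisimple cases where $n_0$ or $m$ is even and $1+v$ is a repeated factor of $v^m-1$. I would first try to handle this by working componentwise through the Chinese remainder decomposition of $\F_2[v]/(v^m-1)$, isolating the $1+v+v^2$-primary part (squarefree because $3$ is odd). If that fails, the fallback is a direct dimension count using the short exact sequence $0\to\Ker(1+w)\to\Ann\to\cdots$.

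For $H_Z$, I will repeat the computation with base ancilla $a_Z=(0,1)$ to obtain $(g_0,g_1)$. I expect it to have the same shape up to unit monomials, since row alternation is exchanged with its complement by the translation $(1,1)$ composed with the reflection symmetries of the word from Proposition~\ref{prop:equiv}. This would give the same dimension $\deg\gcd(1+v+v^2,v^m-1)$.

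Summing the two dependency dimensions via Proposition~\ref{prop:k-equals-deps} gives $k=2\deg\gcd(1+v+v^2,v^{m}-1)$. Finally, $1+v+v^2$ is irreducible over $\F_2$ and its roots are primitive cube roots of unity, so the gcd is $1+v+v^2$ when $3\mid m$ and $1$ otherwise. This yields $k=4$ exactly when $6\mid L_y$, and $k=0$ otherwise.
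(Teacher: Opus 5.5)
Your plan is correct in substance, but it is routed differently from the paper's proof.

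\textbf{Comparison with the paper.} The paper exhibits the annihilators $S_u f(v)$ (Lemma~\ref{lem:Su-kills-h0}), reduces the $h_1$-condition to $f(1+v+v^2)=0$, and counts. That proof never argues that \emph{every} annihilator is $u$-invariant. This follows from the coprimality step used in the proof of Theorem~\ref{thm:kc-collapse}: since $w(1+w)+(1+w+w^2)=1$, any $f\in\Ann_R(h_0,h_1)$ satisfies $f(1+u)=w\cdot f(1+u)(1+w)+(1+u)\cdot f(1+w+w^2)=0$. Hence $f=S_u\,b(v)$ and the count is immediate.

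You instead push $f$ to the $u$-invariant element $f(1+w)=S_u\,a(v)$ and show that $f\mapsto a$ is a bijection onto $\Ker(\cdot(1+v+v^2))$. Your injectivity step is the same coprimality fact in another form, and it is what supplies the upper bound. The detour costs you a surjectivity argument that the direct route does not need. Applying your own identity to $f(1+u)$ rather than to $f$ would remove that step entirely.

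\textbf{Local repairs.}

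\emph{Injectivity.} $(1+w)+(1+w+w^2)=w^2$, not $w+w^2$. As written, ``$f\,w(1+w)=0$, hence $fw^2=0$'' does not follow. The correct identity gives $fw^2=0$ at once, and $w$ is a unit.

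\emph{Surjectivity.} ``On the relevant component'' must mean choosing $b$ with $b(1+v+v^2)=0$. An arbitrary preimage of $a$ under $\cdot(1+v)$ fails. For $a=(v^m-1)/(1+v+v^2)$, the polynomial quotient $b_0=(v^m-1)/(1+v^3)$ gives $b_0(1+v+v^2)=1+v+\cdots+v^{m-1}\neq 0$. Since $v(1+v)\equiv 1 \pmod{1+v+v^2}$, take $b=va$, i.e.\ $f=S_u\,v\,a(v)$.

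\emph{Even periods.} Nothing in the argument uses semisimplicity, so even $L_x/2$ or $L_y/2$ need no separate treatment. Your ``squarefree'' remark is false there: for even $m$ with $3\mid m$, $(1+v+v^2)^2$ divides $v^m-1=(v^{m/2}-1)^2$. This is harmless, because the solution space is killed by $1+v+v^2$.

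\emph{The $Z$-checks.} Translation by $(1,1)$ alone suffices. It preserves $\Lambda_Q$ and $\Lambda_A$, swaps even and odd rows, and carries $X$-supports onto $Z$-supports. This gives $g_0=v\,h_1$ and $g_1=u^{-1}h_0$; no reflection of the word is involved.
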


\begin{remark}[Interpretation of Table~\ref{tab:kcollapse}]
Theorem~\ref{thm:kcollapse-criterion} explains the observed alternation of $k$ in Table~\ref{tab:kcollapse} without constructing $H_X,H_Z$ explicitly.
For thin rectangles $(L_x,L_y)=(2d,d)$ one has $L_y/2=d/2$, so $k=4$ exactly when $3\mid (d/2)$, i.e.\ $6\mid d$.
This is precisely the commensurability mechanism that also stabilizes $k$ on the $(12m,6m)$ family.
\end{remark}

% ==========================================================
\section{Enumeration and canonicalization of coset-constant layouts}\label{sec:layout-enum}
% ==========================================================
% Suggested placement:
% Put this section immediately after Sec.~\ref{sec:invariants} (after Theorem~\ref{thm:layout} and Table~\ref{tab:lattices2}).
% It turns ``layout freedom'' into a countable search space and explains how to avoid redundant layout scans.

Theorem~\ref{thm:layout} states that commutation-compatible layouts are constant on cosets of the odd-multiplicity lattice $L(W)$.
This section makes that freedom explicit: it counts coset-constant layouts, describes natural equivalences (global $X\!\leftrightarrow\! Z$ and translations), and gives a canonical representative rule that supports reproducible layout scans.

\subsection{Layouts as $\F_2$-labelings of ancilla cosets}

Fix a word $W$ and assume $(L_x,L_y)$ are such that the image of $L(W)$ in the torus is a subgroup that preserves the ancilla sublattice.
Let $\mathcal{C}_A(W)$ denote the finite set of ancilla cosets:
\[
\mathcal{C}_A(W):=\Lambda_A / L(W),
\qquad c:=|\mathcal{C}_A(W)|.
\]
A coset-constant layout is equivalently a function
\[
\alpha:\mathcal{C}_A(W)\to\{X,Z\}\ \cong\ \F_2.
\]

\begin{proposition}[Raw layout count]\label{prop:raw-layout-count}
The number of commutation-compatible coset-constant layouts on $\Lambda_A$ equals $2^c$.
Modulo the global swap $X\leftrightarrow Z$, the number of distinct layouts equals $2^{c-1}$.
\end{proposition}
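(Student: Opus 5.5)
The plan is to establish a bijection between coset-constant layouts and functions $\mathcal{C}_A(W)\to\F_2$, count those functions, and then divide by the global swap.

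\textbf{Step 1: reduce to functions on cosets.} By Theorem~\ref{thm:layout}, a layout $\alpha:\Lambda_A\to\{X,Z\}$ is commutation-compatible if and only if it is constant on cosets of $L(W)$. The standing assumption is that the image of $L(W)$ in the torus is a subgroup preserving $\Lambda_A$. So $L(W)$ acts on $\Lambda_A$ by translation, and $\Lambda_A$ is partitioned into the $c$ orbits $\mathcal{C}_A(W)=\Lambda_A/L(W)$.

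\textbf{Step 2: the bijection.} A map constant on the blocks of a partition factors uniquely through the quotient map $\pi:\Lambda_A\to\mathcal{C}_A(W)$. Conversely, any function $\bar\alpha:\mathcal{C}_A(W)\to\{X,Z\}$ pulls back to $\bar\alpha\circ\pi$, which is coset-constant. These two constructions are mutually inverse, so compatible layouts correspond bijectively to elements of $\{X,Z\}^{\mathcal{C}_A(W)}$. That set has $2^c$ elements, which gives the first count.

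\textbf{Step 3: quotient by the global swap.} The swap $X\leftrightarrow Z$ is an involution $\tau$ on the set of compatible layouts. It preserves coset-constancy, and in $\F_2$-language it is $\alpha\mapsto\alpha+\mathbf 1$. The involution is fixed-point free, because $\alpha(a)\neq\tau\alpha(a)$ at every site and $\Lambda_A\neq\emptyset$ (equivalently $c\ge1$). Hence the orbits of $\tau$ all have size exactly $2$, and there are $2^c/2=2^{c-1}$ of them.

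\textbf{Where care is needed.} The counting is trivial; the only delicate point is Step 1 on a finite torus. Here one must use the hypothesis that $L(W)$ descends to a well-defined subgroup of $G$ preserving $\Lambda_A$. This ensures the cosets form a genuine partition of $\Lambda_A$ and that $c$ is well-defined. It also ensures that Theorem~\ref{thm:layout}, stated on the infinite lattice, transfers to the torus, since commuting on the torus is determined by the reduced displacements lying in the image of $L(W)$. If extra wrap-around coincidences produce additional odd-overlap displacements, the effective lattice would grow and $c$ would shrink. I would note explicitly that the hypothesis is meant to exclude this, for example via Proposition~\ref{prop:rectadm}.
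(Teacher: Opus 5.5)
Your proof is correct and takes the same route as the paper. You identify coset-constant layouts with arbitrary $X/Z$ labelings of the $c$ cosets to get $2^c$, then observe that the global swap pairs each labeling with its complement (fixed-point-free since $c\ge 1$, which the paper leaves implicit) to get $2^{c-1}$; your caveat about wrap-around coincidences on the torus is a fair point that the paper's one-line proof passes over, since its stated hypothesis on the image of $L(W)$ does not by itself rule them out.
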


\begin{example}[Two- and four-coset words]\label{ex:layout-counts}
If $L(W)=2\Z\times 2\Z$, then $c=2$ (Fig.~\ref{fig:cosets}, left), so there is only one nontrivial layout up to the global swap: one coset is $X$ and the other is $Z$.
If $L(W)=\Span\{(4,0),(2,2)\}$, then $c=4$ (Fig.~\ref{fig:cosets}, right), giving $2^{4-1}=8$ layouts up to global swap.
\end{example}

\subsection{Translation equivalence and canonical representatives}

Even when two layouts are different as functions on $\mathcal{C}_A(W)$, they can yield isomorphic codes if one is obtained from the other by a translation symmetry of the torus that preserves $W$ (equivalently, preserves the pattern up to a translate).
Such equivalences matter for searches: scanning all $2^{c-1}$ layouts may be redundant.

Let $\mathcal{T}$ be the subgroup of even translations of $G$ (those preserving $\Lambda_A$) that act on $\mathcal{C}_A(W)$ by permuting cosets.
(Concretely, $\tau\in\mathcal{T}$ sends the coset of $a$ to the coset of $a+\tau$.)

\begin{proposition}[Translation-equivalent layouts yield isomorphic codes]\label{prop:layout-translation-equivalence}
If two coset-constant layouts $\alpha,\alpha'$ satisfy $\alpha'=\alpha\circ \pi$ for some coset permutation $\pi$ induced by a translation of the torus, then the resulting CSS codes are isomorphic by a qubit relabeling (translation of coordinates).
\end{proposition}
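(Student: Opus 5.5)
The plan is to exhibit an explicit relabeling of data qubits and ancillas induced by the translation $\tau$ that realizes $\pi$. We then check that it maps the $X$-check rows of one code bijectively onto the $X$-check rows of the other, and likewise for $Z$.

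\textbf{Step 1 (the translation).} Let $\tau\in\mathcal{T}$ be an even translation of $G$ inducing $\pi$. By definition of $\mathcal{T}$, $\pi([a])=[a+\tau]$, so $\alpha'(a)=\alpha(a+\tau)$ for every $a\in\Lambda_A$. Because $\tau$ has even coordinates, the map $T_\tau:G\to G$, $g\mapsto g+\tau$, preserves the parity of $x+y$. Hence $T_\tau$ restricts to bijections $\Lambda_Q\to\Lambda_Q$ and $\Lambda_A\to\Lambda_A$. Well-definedness on $\mathcal{C}_A(W)$ holds because $T_\tau$ commutes with translations by the image of $L(W)$ in $G$. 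That image is a subgroup, and $G$ is abelian.

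\textbf{Step 2 (supports transform covariantly).} Every check of either code has the form $a+P(W)$. Therefore
\[
T_\tau\big(a+P(W)\big)=(a+\tau)+P(W),
\]
computed modulo $(L_x,L_y)$. So the data permutation $T_\tau|_{\Lambda_Q}$ sends the support of the check anchored at $a$ to the support of the check anchored at $a+\tau$. In matrix terms, if $\Pi$ is the permutation matrix of $T_\tau$ on data columns, then the row $h_a$ satisfies $h_a\Pi=h_{a+\tau}$.

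\textbf{Step 3 (types match).} Under $\alpha'$, the ancilla $a$ carries type $\alpha'(a)=\alpha(a+\tau)$. Hence $a\in\Lambda'_X$ if and only if $a+\tau\in\Lambda_X$, and the same holds for $Z$. Combining this with Step 2, the map $a\mapsto a+\tau$ is a bijection $\Lambda'_X\to\Lambda_X$. It satisfies $H'_X\Pi=P_XH_X$ for the row-permutation matrix $P_X$ that it induces. The same argument gives $H'_Z\Pi=P_ZH_Z$ for $Z$.

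Since we apply one common column permutation $\Pi$ to both $H_X$ and $H_Z$, the following are preserved: the commutation relation $H_XH_Z^\top=0$, row spaces, kernels, and weights. Consequently the ranks, $k$, $d_X$ and $d_Z$ all coincide, and the codes are isomorphic via the qubit relabeling $T_\tau$.

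\textbf{Main obstacle.} The only delicate point is Step 1. We must ensure that the coset permutation $\pi$ genuinely comes from a torus translation that is well defined modulo $(L_x,L_y)$ and is even. Otherwise $T_\tau$ would swap data and ancilla sites. This is exactly why $\mathcal{T}$ is restricted to even translations. We also need the torus image of $L(W)$ to be a subgroup, which is the standing assumption of this section. Once these conditions hold, everything else is the bookkeeping above.
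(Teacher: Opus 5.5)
Your proof is correct and follows the same route as the paper's. You translate the whole construction by $\tau$, note that $a+P(W)\mapsto(a+\tau)+P(W)$ while the layout is carried to its permuted version, and conclude isomorphism, making explicit the common column permutation $\Pi$ and row permutations $P_X,P_Z$ that the paper leaves implicit. The only property of $\tau$ you use is that it preserves the parity of $x+y$, which holds for every element of $\mathcal{T}$ (by definition the translations preserving $\Lambda_A$), so you need not require both coordinates of $\tau$ to be even.
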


\begin{definition}[Canonical layout representative]\label{def:canonical-layout}
Fix an ordering of the cosets in $\mathcal{C}_A(W)$ (e.g.\ lexicographic by a chosen coset representative).
Given a layout $\alpha$, define $\mathrm{can}(\alpha)$ as follows:
apply every translation-induced coset permutation in $\mathcal{T}$ to $\alpha$, then (if desired) apply the global flip so that the first coset is labeled $X$, and finally choose the lexicographically smallest bitstring among the resulting layouts.
\end{definition}

\begin{remark}[Practical use in searches]
Definition~\ref{def:canonical-layout} is a concrete, implementation-friendly way to avoid redundant layout scans.
In a script, one simply generates all $2^{c-1}$ labelings, canonicalizes each, and keeps one representative per canonical string.
\end{remark}

% \begin{figure}[t]
% \centering
% \begin{tikzpicture}[
%   every node/.style={font=\scriptsize},
%   box/.style={draw,rounded corners,minimum width=1.0cm,minimum height=0.8cm,
%               align=center,inner sep=2pt},
%   >=Latex
% ]
% % --- 2x2 coset grid as a matrix (stable spacing) ---
% \matrix (M) [matrix of nodes,
%   nodes={box},
%   column sep=7mm,
%   row sep=6mm
% ] {
%   |[fill=gray!15]| A & |[fill=gray!35]| B \\
%   |[fill=gray!15]| C & |[fill=gray!35]| D \\
% };

% % --- arrows (use east/west anchors so lines don't cut through boxes) ---
% \draw[->,thick] (M-1-1.east) -- node[above, yshift=1pt] {$\tau$} (M-1-2.west);
% \draw[->,thick] (M-2-1.east) -- node[above, yshift=1pt] {$\tau$} (M-2-2.west);

% % --- right-hand explanation as ONE wrapped node (prevents overlap) ---
% \node[anchor=west,align=left,text width=6.2cm] at ($(M.east)+(1.2,0)$) {%
% Example: $L(W)=\langle(4,0),(2,2)\rangle$ gives four ancilla cosets A--D.\\
% An even translation can permute cosets (e.g.\ swap A$\leftrightarrow$B and C$\leftrightarrow$D).%
% };

% \end{tikzpicture}
% \caption{Abstract coset-permutation viewpoint for four-coset words.
% Coset-constant layouts are labelings of A--D by $X/Z$.
% Translation symmetries act by permuting cosets, so many labelings can be equivalent up to coordinate relabeling (Proposition~\ref{prop:layout-translation-equivalence}).}
% \label{fig:layout-coset-permutation}
% \end{figure}

% ==========================================================
\section{Row-Periodic Quasi-Cyclic Reduction via Annihilators}\label{sec:qc}
% ==========================================================

We now add a refinement of the quasi-cyclic viewpoint tailored to common layouts (especially row alternation): for a $2\times2$ unit cell, rank deficiency and stabilizer dependencies reduce to annihilators in a small polynomial ring.

\subsection{The $2\times2$ unit cell and group algebra}
Assume $L_x,L_y$ even and define the reduced torus
\[
G_0 := \Z_{L_x/2}\times \Z_{L_y/2}.
\]
Let
\[
R := \F_2[X^{\pm 1},Y^{\pm 1}]/(X^{L_x/2}-1,\;Y^{L_y/2}-1)
\]
be the group algebra of $G_0$.

Row-periodic layouts (including row alternation) yield block-circulant check matrices generated by constant-size polynomial rows.
For row alternation, there is one $X$-check type and one $Z$-check type per cell, represented by
\[
\bm{h}_X=(h_{X,0},h_{X,1})\in R^{1\times 2},\qquad
\bm{h}_Z=(h_{Z,0},h_{Z,1})\in R^{1\times 2},
\]
where the two components correspond to the two data-qubit types per cell.

\subsection{Annihilators and stabilizer dependencies}
\begin{definition}[Annihilator]\label{def:annih}
For $\bm{h}=(h_0,h_1)\in R^{1\times 2}$, define
\[
\mathrm{Ann}(\bm{h}) := \{f\in R:\; f h_0=0 \ \text{and}\ f h_1=0\}.
\]
\end{definition}

\begin{proposition}[Rank deficiency from annihilators]\label{prop:ann_rank}
Let $H$ be the block-circulant matrix over $\F_2$ generated by translating $\bm{h}\in R^{1\times 2}$ over $G_0$.
Then $\dim_{\F_2}\Ker(H^\top)=\dim_{\F_2}\mathrm{Ann}(\bm{h})$ and
\[
\rank(H)=|G_0|-\dim_{\F_2}\mathrm{Ann}(\bm{h}).
\]
\end{proposition}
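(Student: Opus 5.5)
The argument identifies $H$ with multiplication maps over $R$ and then applies rank--nullity; no new combinatorics is needed. Index the rows of $H$ by $G_0$, so each row is a translate of the base check. Index the columns by the two data cosets $q_\sigma+(2G_0)$, $\sigma\in\{0,1\}$, each also identified with $G_0$. A row vector of coefficients $r\in\F_2^{G_0}$ is then an element $f\in R$, with $f=\sum_{(i,j)} r_{(i,j)}u^iv^j$ (or $X^iY^j$ in this section's notation). By Proposition~\ref{prop:block-circulant-expansion}, the left action $r\mapsto rH$ is exactly the $\F_2$-linear map
\[
\phi:R\to R\oplus R,\qquad f\mapsto (fh_0,\ fh_1).
\]
The key checkpoint is that translating a check by $(2i,2j)$ moves its support in each data coset by the same $(i,j)$. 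That is precisely multiplication by the monomial $u^iv^j$ in both components. I will verify this directly from Definition~\ref{def:qc-check-vector}, since $\sigma(Q)$ depends only on $Q$ and not on the anchor translate.

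Next, $\Ker(H^\top)$ is the set of row-combinations $r$ with $rH=0$, that is, $H^\top r^\top=0$. Under the identification above this is $\Ker\phi$, which equals $\{f: fh_0=fh_1=0\}=\mathrm{Ann}(\bm h)$ by Definition~\ref{def:annih}. The identification $\F_2^{G_0}\cong R$ is an $\F_2$-linear isomorphism, so
\[
\dim_{\F_2}\Ker(H^\top)=\dim_{\F_2}\mathrm{Ann}(\bm h).
\]
Finally, $H$ has exactly $|G_0|$ rows, one per translate. Row rank equals column rank, and rank--nullity applied to $H^\top$ (a map from $\F_2^{|G_0|}$) gives
\[
\rank(H)=|G_0|-\dim\Ker(H^\top).
\]

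The only delicate point is the bookkeeping in the first step: checking that the block-circulant expansion matches ring multiplication with the right orientation. The question is whether row $(i,j)$ corresponds to $u^iv^j h_\sigma$ or to the reflection $u^{-i}v^{-j}h_\sigma$. Either convention yields a kernel isomorphic to $\mathrm{Ann}(\bm h)$. The reflection $f\mapsto\bar f$ induced by $(i,j)\mapsto(-i,-j)$ is a ring automorphism of the commutative ring $R$, so it carries the annihilator of $\bm h$ onto the annihilator of $\bar{\bm h}$, which has the same dimension in each case. I will also note that because $R$ is commutative, left and right annihilators coincide. This removes any ambiguity about whether $f$ multiplies $h_\sigma$ on the left or the right.
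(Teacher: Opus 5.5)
Your proposal is correct and matches the paper's proof: you identify row-coefficient vectors with $f\in R$, note that the row combination is $f\bm{h}=(fh_0,fh_1)$ so that $\Ker(H^\top)\cong\mathrm{Ann}(\bm{h})$, and finish with rank--nullity on the $|G_0|$ rows. The orientation discussion and the check against Definition~\ref{def:qc-check-vector} are harmless but unnecessary. Any bijective reindexing of the rows leaves the kernel dimension unchanged, and the statement already takes $H$ to be the block-circulant matrix generated by $\bm{h}$.
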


\begin{corollary}[A $k$-formula for row alternation]\label{cor:k_ann}
In the row-alternating layout (one check type per cell for each Pauli type),
\[
k=\dim_{\F_2}\mathrm{Ann}(\bm{h}_X)+\dim_{\F_2}\mathrm{Ann}(\bm{h}_Z).
\]
\end{corollary}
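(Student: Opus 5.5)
The corollary follows by combining Proposition~\ref{prop:k-equals-deps} with Proposition~\ref{prop:ann_rank}, applied separately to $H_X$ and $H_Z$.

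First, I recall why $k$ equals the total number of dependencies. Every ancilla carries exactly one check, so $m_X+m_Z=|\Lambda_A|=n$. Rank–nullity on the transposes gives $\rank(H_X)=m_X-\dim\Ker(H_X^\top)$ and $\rank(H_Z)=m_Z-\dim\Ker(H_Z^\top)$. Substituting into $k=n-\rank(H_X)-\rank(H_Z)$ yields
\[
k=\dim\Ker(H_X^\top)+\dim\Ker(H_Z^\top).
\]
Here I use that row alternation is commutation-compatible, so $H_XH_Z^\top=0$ and the formula for $k$ applies. This follows from Theorem~\ref{thm:layout}, since row alternation is constant on cosets of $L(W)$ for the words considered. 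Alternatively, it can be taken as a standing hypothesis of the row-alternating setting.

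Second, I identify each kernel with an annihilator. Under row alternation, $\Lambda_X$ is the single $G_0$-coset $a_X+2G_0$ and $\Lambda_Z$ is the single coset $a_Z+2G_0$. The data set $\Lambda_Q$ is the disjoint union of the two cosets $q_0+2G_0$ and $q_1+2G_0$. By Proposition~\ref{prop:block-circulant-expansion}, $H_X$ is exactly the block-circulant expansion of $\bm{h}_X=(h_{X,0},h_{X,1})$ over $G_0$. Similarly, $H_Z$ is the expansion of $\bm{h}_Z$, computed from the base ancilla $a_Z$. A vector in $\Ker(H_X^\top)$ is a coefficient function $f\in R$ on $\Lambda_X$ such that the corresponding sum of rows vanishes, i.e.\ $(fh_{X,0},fh_{X,1})=0$. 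So Proposition~\ref{prop:ann_rank}, equivalently Theorem~\ref{thm:qc-dependency-module}, gives $\dim\Ker(H_X^\top)=\dim_{\F_2}\mathrm{Ann}(\bm{h}_X)$, and likewise $\dim\Ker(H_Z^\top)=\dim_{\F_2}\mathrm{Ann}(\bm{h}_Z)$. Adding the two equalities proves the corollary.

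The only point needing care, and the one I expect to be the main obstacle, is the identification of $H^\top$-kernels with annihilators. This requires the correspondence between row combinations and ring multiplication, $f\mapsto f\cdot\bm{h}$. Under it, the row indexed by $a+(2i,2j)$ corresponds to the monomial $u^iv^j$ acting on $\bm{h}$. That correspondence holds because translating a check by the even vector $(2i,2j)$ multiplies its polynomial by $u^iv^j$. For $H_Z$, one must check that the same bookkeeping applies with base point $a_Z$; the data cosets may be swapped relative to $H_X$, but this only permutes the components of $\bm{h}_Z$ and does not change the annihilator.
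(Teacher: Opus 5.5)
Your proof is correct and follows the paper's intended route: the corollary is Proposition~\ref{prop:k-equals-deps} combined with Proposition~\ref{prop:ann_rank} (equivalently Theorem~\ref{thm:qc-dependency-module}) applied to $H_X$ and $H_Z$, which is also how the paper argues inside the proof of Theorem~\ref{thm:kcollapse-criterion}. One small refinement: the identity $k=\dim\Ker(H_X^\top)+\dim\Ker(H_Z^\top)$ uses only $m_X+m_Z=n$ and not commutation. Moreover, row alternation is not coset-constant for every word (e.g.\ any $W$ with $(1,1)\in\Delta_{\mathrm{odd}}(W)$), so commutativity should be read as a standing hypothesis rather than deduced from Theorem~\ref{thm:layout}.
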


% ==========================================================
\section{Worked Families and a Detailed Case Study}\label{sec:families}
% ==========================================================

This section illustrates how the framework is used in practice:
\[
W\ \longrightarrow\ P(W)\ \longrightarrow\ L(W)\ \longrightarrow\ \text{layout class}\ \longrightarrow\ \text{finite-torus behavior}.
\]

\subsection{Benchmark words through the invariant lens}
The words $\word{NE$^2$N}$, $\word{NE3N}$, $\word{N$^2$E$^2$N$^2$}$, and $\word{N2E3N2}$ are central in \cite{Geher2025Directional}.
Table~\ref{tab:lattices2} highlights a structural distinction:
$\word{NE$^2$N}$ and $\word{N$^2$E$^2$N$^2$}$ have $L(W)=2\Z\times2\Z$ (two ancilla cosets), while $\word{NE$^3$N}$ and $\word{N$^2$E$^3$N$^2$}$ have $L(W)=\Span\{(4,0),(2,2)\}$ (four ancilla cosets).
The latter class admits more layout freedom (Fig.~\ref{fig:cosets}), which is often useful when searching for good finite-size instances.

\subsection{Case study: the word $W=\word{NE$^2$NE$^2$N}$ and its four cosets}
We now focus on
\[
W=\word{NE$^2$NE$^2$N}=\word{N\,E\,E\,N\,E\,E\,N},
\]
with support pattern shown in Fig.~\ref{fig:fivewords}.
From Table~\ref{tab:lattices2},
\begin{equation}
L(W)=\Span_{\Z}\{(4,0),(2,2)\},
\end{equation}
so layouts must be constant on four ancilla cosets.

A useful explicit coset invariant is the residue pair
\begin{equation}
\big(y \bmod 2,\;\; (x-y)\bmod 4\big),
\end{equation}
which is constant on each $L(W)$-coset. On ancilla sites ($x+y$ odd) this yields exactly four classes, labeled A--D in Fig.~\ref{fig:cosetlabels}.
Row alternation uses only $y\bmod 2$, while general coset-constant layouts may depend on both residues.

\begin{figure}[t]
\centering
\begin{tikzpicture}[scale=0.55]
  \draw[step=1,gray!20,very thin] (0,0) grid (8,6);

  % label 4 cosets for L(W)=<(4,0),(2,2)>
  \foreach \x in {0,...,7}{
    \foreach \y in {0,...,5}{
      \pgfmathtruncatemacro{\p}{mod(\x+\y,2)}
      \ifnum\p=1
        \pgfmathtruncatemacro{\ym}{mod(\y,2)}
        \pgfmathtruncatemacro{\xm}{mod(\x-\y+100,4)}
        \def\lab{}
        \ifnum\ym=0
          \ifnum\xm=1 \def\lab{A}\fi
          \ifnum\xm=3 \def\lab{B}\fi
        \else
          \ifnum\xm=1 \def\lab{C}\fi
          \ifnum\xm=3 \def\lab{D}\fi
        \fi
        \fill[gray!25] (\x+0.5,\y+0.5) circle (2.0pt);
        \node at (\x+0.5,\y+0.5) {\scriptsize \lab};
      \fi
    }
  }
\end{tikzpicture}
\caption{Ancilla cosets for $L(W)=\Span\{(4,0),(2,2)\}$ for the case-study word $W=\word{NE$^2$NE$^2$N}$.
Two ancillas $(x,y)$ and $(x',y')$ lie in the same $L(W)$-coset iff the pair of residues $\big(y\bmod 2,\ (x-y)\bmod 4\big)$ matches.
Any commutation-compatible layout must be constant on each labeled class (Theorem~\ref{thm:layout}), but the assignment of $X$ versus $Z$ across A--D is otherwise free up to global symmetries.}
\label{fig:cosetlabels}
\end{figure}

\subsection{A structured torus family and certified nonzero $k$}
We study the family $G_m=\Z_{12m}\times \Z_{6m}$ with the row-alternating layout (Fig.~\ref{fig:checkerboard}).
Then $n=\tfrac12(12m)(6m)=36m^2$.
In this family we can certify nonzero $k$ by explicit stabilizer dependencies and exhibit explicit commuting operators that upper-bound distance.

\begin{proposition}[Two stabilizer dependencies per type on $12m\times 6m$]\label{prop:deps}
On $G_m$ with the row-alternating layout, there exist two independent $\F_2$ relations among the $X$-checks and two among the $Z$-checks.
Consequently,
\[
\rank(H_X)\le \#\Lambda_X-2,\qquad \rank(H_Z)\le \#\Lambda_Z-2,
\]
and therefore $k\ge 4$ for all $m$.
\end{proposition}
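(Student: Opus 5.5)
We plan to exhibit the two dependencies explicitly in the QC normal form of Sec.~\ref{sec:qc-rank}, via Theorem~\ref{thm:qc-dependency-module}. On $G_m$ we have $L_x/2=6m$ and $L_y/2=3m$, so
\[
R=\F_2[u^{\pm1},v^{\pm1}]/(u^{6m}-1,\,v^{3m}-1).
\]
By Theorem~\ref{thm:qc-dependency-module}, an $X$-dependency is an $f\in R$ with $fh_0=fh_1=0$, where $(h_0,h_1)$ is the check vector of Example~\ref{ex:qc-vector-case}. We will restrict to $f$ of the form $f=S_u\,g(v)$, with $S_u$ the horizontal sum idempotent (Definition~\ref{def:Su}) and $g$ a polynomial in $v$ alone. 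In coordinates, such an $f$ is the sum of all $X$-checks in a set of full rows, weighted by $g$.

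By Lemma~\ref{lem:Su-kills-h0}, $fh_0=g\,S_uh_0=0$ automatically, and
\[
fh_1=S_u\,g(v)\,(1+v+v^2).
\]
Set $\Phi(v):=1+v+v^{2}+\dots+v^{3m-1}$. Since $v^{3m}-1=(v^3-1)\Phi_{3}'(v)$ with
\[
\Phi_{3}'(v):=1+v^3+v^6+\dots+v^{3(m-1)},
\]
we choose
\[
g_1=(1+v)\,\Phi_3'(v),\qquad g_2=v(1+v)\,\Phi_3'(v).
\]
Then $g_i(1+v+v^2)$ is a multiple of $(v^3-1)\Phi_3'(v)=v^{3m}-1=0$ in $R$. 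Concretely, $g_1$ is supported on rows with $j\equiv 0,1 \pmod 3$ and $g_2$ on rows with $j\equiv 1,2\pmod 3$. This is where the hypothesis $6\mid L_y$ enters, through $3\mid L_y/2$.

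We next check independence. The elements $S_ug_1$ and $S_ug_2$ have distinct supports in $R$, since row $j\equiv0$ occurs only in $g_1$. Both are also nonzero, because $S_u\neq0$ and $g_i\ne0$ as polynomials of $v$-degree $<3m$. Hence they are linearly independent over $\F_2$, which gives $\dim\Ker(H_X^\top)\ge 2$ and $\rank(H_X)\le\#\Lambda_X-2$.

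For the $Z$-checks we repeat the computation with base ancilla $a_Z=(0,1)$ to get $(g_0,g_1)$. The main obstacle is verifying the analogue of Lemma~\ref{lem:Su-kills-h0}: that $S_u$ kills one component and sends the other to $S_u\cdot v^{c}(1+v+v^2)$ for some shift $c$. We will check this by listing $a_Z+Q$ for the seven offsets. Because $a_Z-a_X=(-1,1)$ is an odd translation, the roles of the cosets $q_0$ and $q_1$ swap, and we expect the three points $(0,1),(4,3),(8,5)$ to again give a geometric progression $1,u^2v,u^4v^2$ up to a common monomial, with the remaining four points pairing up with equal $v$-exponents. If the pairing fails for the $Z$ side, the fallback is a $Z$-pattern symmetry: a $180^\circ$ rotation maps $P(W)$ to a translate of itself, since the offset list is centrally symmetric about $(4,3)$. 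This rotation exchanges the $y$-even and $y$-odd ancilla rows up to translation, transporting the $X$-dependencies to $Z$-dependencies via Proposition~\ref{prop:equiv}.

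Finally, Proposition~\ref{prop:k-equals-deps} gives
\[
k=\dim\Ker(H_X^\top)+\dim\Ker(H_Z^\top)\ge 4.
\]
Consistency with Theorem~\ref{thm:kcollapse-criterion} also shows that these four dependencies are all of them.
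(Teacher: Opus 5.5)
Your proof is correct, and the relations you build are the paper's own: $S_u g_1$ is the sum of all $X$-checks in rows $y\equiv 0,2\pmod 6$ (the paper's first relation), and $S_u g_2$ is the sum over $y\equiv 2,4\pmod 6$ (the sum of the paper's two relations), so both pairs span the same space. The difference is in how the cancellation is certified.

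The paper stays in coordinates. A data qubit at height $y$ is hit $\#\{j: y-(Q_j)_y\equiv 0\text{ or }2 \pmod 6\}$ times, and a six-case check on the multiset $\{1,2,2,3,4,4,5\}$ shows this count is always even. That argument is elementary and needs no quasi-cyclic machinery. But it handles $Z$ only ``by symmetry'', and on its own it does not explain why these residue classes work or whether further relations exist.

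You instead use Lemma~\ref{lem:Su-kills-h0} to reduce everything to the one-variable identity $(1+v)(1+v+v^2)\Phi_3'(v)=1+v^{3m}$. This isolates the factor $1+v+v^2$ and the role of $3\mid L_y/2$, uniformly in $m$. It also connects directly to the upper bound. Since $G_m$ is the thin rectangle with $d=6m$, the B\'ezout step $1=(1+V+V^2)+V(1+V)$, $V=u^2v$, in the proof of Theorem~\ref{thm:kc-collapse} forces $f(1+u)=0$ for every annihilator. So your restriction to $f=S_u g(v)$ loses nothing, and there are exactly two relations per type.

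The $Z$-side computation you defer comes out as you predicted. With base ancilla $a_Z$ the check vector is $(v\,h_1,\ u^{-1}h_0)$, so $S_u$ kills the second component and maps the first to $S_u\,v(1+v+v^2)$; the same $g_1,g_2$ therefore work. More directly, translation by $(-1,1)$ maps each $X$-check support onto a $Z$-check support, so $H_Z$ is $H_X$ up to row and column permutations. Your $180^\circ$-rotation fallback is therefore unnecessary. Proposition~\ref{prop:equiv} would not justify it anyway, since it compares codes of equivalent words rather than exchanging the $X$- and $Z$-checks of a single code.
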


The dependencies are not mysterious algebraic accidents: they come from parity cancellations that are visible at the level of the word. 
Very roughly, when one multiplies (adds over $\F_2$) all $X$-checks whose anchors lie in certain $y$-residue classes modulo $6$, each data qubit is hit an even number of times because the $y$-coordinates in $P(W)$ come in repeated pairs mod $6$.
This is why the same word on an incommensurate height can lose those dependencies and collapse to $k=0$ under the same row-alternating layout (as seen in the thin-rectangle examples of Sec.~\ref{sec:numerics}): the torus no longer supports the residue-class partition underlying the cancellation. Fig.~\ref{fig:depscartoon} shows this case. 

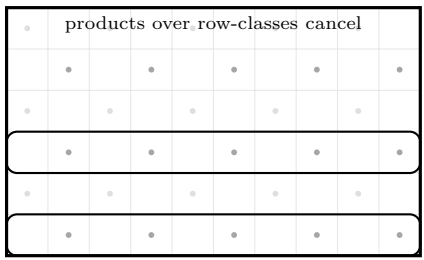
\begin{figure}[t]
\centering
\begin{tikzpicture}[scale=0.55]
  \draw[step=1,gray!20,very thin] (0,0) grid (10,6);
  \draw[very thick] (0,0) rectangle (10,6);
  % ancillas (odd parity) and mark X-rows y even
  \foreach \x in {0,...,9}{
    \foreach \y in {0,...,5}{
      \pgfmathtruncatemacro{\p}{mod(\x+\y,2)}
      \ifnum\p=1
        \pgfmathtruncatemacro{\py}{mod(\y,2)}
        \ifnum\py=0
          \fill[gray!70] (\x+0.5,\y+0.5) circle (2.0pt);
        \else
          \fill[gray!25] (\x+0.5,\y+0.5) circle (2.0pt);
        \fi
      \fi
    }
  }
  \draw[thick,rounded corners] (0,0) rectangle (10,1);
  \draw[thick,rounded corners] (0,2) rectangle (10,3);
  \node at (5,5.6) {\scriptsize products over row-classes cancel};
\end{tikzpicture}
\caption{Illustration of Proposition~\ref{prop:deps}: on commensurate tori, products of checks over entire residue classes of anchor rows can cancel, producing stabilizer dependencies (formalized in Appendix~\ref{app:deps}).}
\label{fig:depscartoon}
\end{figure}

\subsection{An explicit commuting-operator motif}
Let $t=(12,6)$ and $r=(4,2)$ modulo $(12m,6m)$.
For a data site $p\in\Lambda_Q$, define
\begin{equation}
S(p)=\{p+jt,\ p+r+jt:\ j=0,1,\dots,m-1\}.
\end{equation}
Geometrically, $S(p)$ consists of two interleaved length-$m$ orbits under translation by $t$, separated by the fixed displacement $r$ (Fig.~\ref{fig:motif}).
The usefulness of this construction is that checking commutation reduces to an overlap-parity count between $S(p)$ and translated copies of $P(W)$; for suitable $p$, each check intersects $S(p)$ in an even number of sites.

\begin{proposition}[Weight-$2m$ commuting operators]\label{prop:2m}
For suitable choices of $p$, the Pauli operators $X$ supported on $S(p)$ and $Z$ supported on $S(p)$ commute with all opposite-type checks.
In particular, for this family $d\le 2m$.
\end{proposition}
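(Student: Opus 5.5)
The plan is to reduce commutation to a parity count on $P(W)$ modulo the cyclic subgroup generated by $t$. Write $T:=\langle t\rangle\subset G_m$. Since $t=(12,6)$ and $G_m=\Z_{12m}\times\Z_{6m}$, the subgroup $T$ has order exactly $m$. Then $S(p)=(p+T)\sqcup(p+r+T)$, and $|S(p)|=2m$ as soon as $r\notin T$. This holds because $r=(4,2)$ is not of the form $(12j,6j)$ modulo $(12m,6m)$.

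For an opposite-type anchor $b$, set $c:=p-b$. Then the overlap is
\[
|S(p)\cap(b+P(W))| \;=\; \#\{Q\in P(W): Q\equiv c \bmod T\}+\#\{Q\in P(W): Q\equiv c+r \bmod T\}.
\]
All differences $Q-Q'$ and $Q-Q'\pm r$ with $Q,Q'\in P(W)$ have $|x|\le 12$ and $|y|\le 6$, and equality with $\pm t$ can be checked directly. A nonzero element of $T$ has the form $(12j,6j)$ reduced modulo $(12m,6m)$, so one verifies that no such difference lies in $T$ except $0$. Hence $P(W)$ maps injectively to $G_m/T$, and so does $P(W)+r$ away from genuine coincidences. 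The overlap therefore equals $[c\in\bar P]+[c+r\in\bar P]$, where $\bar P$ is the image of $P(W)$ in $G_m/T$.

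The structural input comes from reading off the word. The offsets of $W=\word{NE$^2$NE$^2$N}$ decompose into $r$-chains:
\[
(0,1)\to(4,3)\to(8,5),\qquad (1,2)\to(5,4),\qquad (3,2)\to(7,4).
\]
The overlap is odd exactly when $c$ lies in the \emph{boundary set}
\[
B=\{(8,5),(5,4),(7,4)\}\cup\{(-4,-1),(-3,0),(-1,0)\}\pmod T,
\]
that is, the chain ends and the points one step before each chain start. So $X_{S(p)}$ commutes with all $Z$-checks if and only if no $Z$-anchor has the form $p-c-jt$ with $c\in B$. The analogous condition holds for $Z_{S(p)}$ against the $X$-anchors. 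Since $t$ has even $y$-component, the type of $p-c-jt$ under row alternation depends only on $p_y-c_y \bmod 2$. The remaining step is a finite parity check: choose $p$ so that every anchor $p-c$ with $c\in B$ has the same type as the operator.

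I expect this last step to be the main obstacle. The boundary set $B$ contains offsets with both $y$-parities: $(8,5)$ and $(-4,-1)$ have odd $y$, while $(5,4)$, $(7,4)$, $(-3,0)$, $(-1,0)$ have even $y$. Under strict row alternation, therefore, no single $p$ obviously clears all six boundary offsets. I would first recheck $B$ against the actual $T$-reduction, since wrap-around in $y$ modulo $6m$ might merge chain ends. If the obstruction persists, I would repair the motif rather than the parity argument. One option is to replace $r$ by a vector such as $r'=(8,4)$ or use the union of the motifs for $r$ and $2r$, so that the full chain $(0,1)\to(4,3)\to(8,5)$ pairs up. Another is to add a second $T$-orbit that cancels the two odd-$y$ boundary points; this keeps the weight $O(m)$.

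Once commutation is established, I would finish by showing that $X_{S(p)}$ and $Z_{S(p)}$ are not in the stabilizer row spaces. The intended route is to exhibit, via the dependencies of Proposition~\ref{prop:deps}, an opposite-type logical with odd overlap. The bound $d\le 2m$ then follows from the weight count $|S(p)|=2m$.
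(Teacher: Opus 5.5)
You set up the right reduction, and it is the same one the paper's terse proof uses. Specifically, $|S(p)|=2m$ because $r\notin T$, $P(W)$ embeds in $G_m/T$, and the overlap with the check at $b$ is $[c\in\bar P]+[c+r\in\bar P]$ with $c=p-b$. \textbf{The gap is the decisive parity step:} you compute the boundary set $B$ in $\Z^2$ rather than in $G_m/T$.

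\textbf{The missing idea is $3r=(12,6)=t$.} Hence $(8,5)+r=(0,1)+t$ and $(-4,-1)=(8,5)-t$. So the chain $(0,1)\to(4,3)\to(8,5)$ closes into an $r$-cycle modulo $T$ and contributes no boundary points. This holds for every $m$; it is not a wrap-around effect modulo $6m$. It also contradicts your claim that no $Q-Q'\pm r$ lies in $T\setminus\{0\}$, since $(8,5)-(0,1)+r=t$.

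The obstruction you anticipate is therefore an artifact of the miscomputation. Your proposed repairs (replacing $r$, adding orbits) would prove a statement about a different operator, not this proposition.

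\textbf{The correct computation closes the argument.} The true boundary set is $B=\{(5,4),(7,4),(-3,0),(-1,0)\}+T$. All four points lie in the parity class $(x,y)\equiv(1,0)\pmod 2$.
\begin{itemize}
\item For $X_{S(p)}$, take $p\equiv(0,0)$. Every $Z$-anchor satisfies $b\equiv(0,1)$, so $c\equiv(0,1)\notin B$, and each overlap is $0$ or $2$.
\item For $Z_{S(p)}$, take $p\equiv(1,1)$. Every $X$-anchor satisfies $b\equiv(1,0)$, so again $c\equiv(0,1)$.
\end{itemize}
For the other parity class the operator really does fail: $c=(5,4)$ gives overlap $1$. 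So ``suitable $p$'' means exactly this parity choice.

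\textbf{The nontriviality step also needs a different argument.} The bound $d\le 2m$ requires $X_{S(p)}\notin\row(H_X)$, and the paper's sketch omits this too. Your suggested route through Proposition~\ref{prop:deps} does not work as stated. Those dependencies are vectors in $\Ker(H_X^\top)$, i.e.\ relations among checks, not $Z$-type operators in $\Ker(H_X)$, so they cannot witness an odd overlap.

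A direct witness, for $p_y$ even, is $Z$ on all data qubits of the row $y=p_y$:
\begin{itemize}
\item It commutes with every $X$-check. An $X$-anchor has even $y$, so only the offsets with $Q_y\in\{2,2,4,4\}$ can land in that row. Offsets with equal $Q_y$ land together, so each check meets the row in $0$ or $2$ sites.
\item It meets $S(p)$ only at $p$, because the rows of $S(p)$ are $p_y+6j$ and $p_y+2+6j$ modulo $6m$.
\end{itemize}
Hence $X_{S(p)}\in\Ker(H_Z)\setminus\row(H_X)$, and $d\le d_X\le 2m$. The analogous $X$-row at odd height certifies $Z_{S(p)}$ for $p_y$ odd.
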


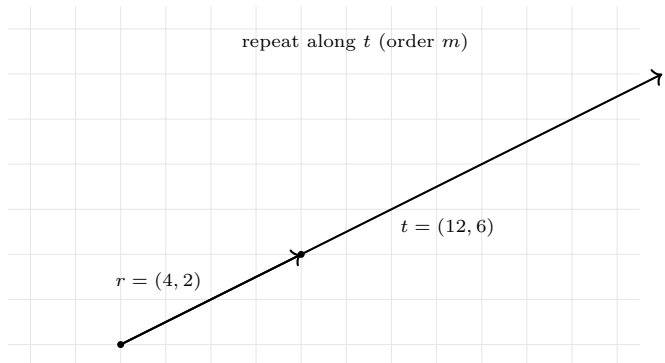
\begin{figure}[t]
\centering
\begin{tikzpicture}[scale=0.60]
  \draw[step=1,gray!20,very thin] (-0.5,-0.5) grid (13.5,7.5);
  \fill (2,0) circle (2.2pt);
  \fill (6,2) circle (2.2pt);
  \draw[thick,->] (2,0)--(6,2) node[midway,above left] {\scriptsize $r=(4,2)$};
  \draw[thick,->] (2,0)--(14,6) node[midway,below right] {\scriptsize $t=(12,6)$};
  \node at (7.2,6.7) {\scriptsize repeat along $t$ (order $m$)};
\end{tikzpicture}
\caption{The $(r,t)$ motif in Proposition~\ref{prop:2m}: two sites per step along a length-$m$ orbit, giving weight $2m$ and hence a linear-in-$m$ distance upper bound.}
\label{fig:motif}
\end{figure}

\subsection{Exact criterion for dimension collapse on thin rectangles}\label{sec:kc}
A central practical phenomenon is boundary-condition sensitivity of $k$ under a fixed word and layout.
For the case-study word and row alternation on thin rectangles $(L_x,L_y)=(2d,d)$, the observed behavior admits an exact divisibility criterion.

\begin{theorem}[Exact $k$-collapse criterion for $W=\word{NE$^2$NE$^2$N}$]\label{thm:kc-collapse}
Let $W=\word{NE$^2$NE$^2$N}$ and consider the directional CSS instance on $(L_x,L_y)=(2d,d)$ with $d$ even and the row-alternating layout.
Then
\[
k=
\begin{cases}
4,& 6\mid d,\\
0,& 6\nmid d.
\end{cases}
\]
\end{theorem}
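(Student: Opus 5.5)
The plan is to obtain the statement as the specialization $L_y=d$, $L_x=2d$ of Theorem~\ref{thm:kcollapse-criterion}, and to prove that criterion directly in the QC normal form. By Proposition~\ref{prop:k-equals-deps}, $k=\dim\Ker(H_X^\top)+\dim\Ker(H_Z^\top)$. By Theorem~\ref{thm:qc-dependency-module} (equivalently Proposition~\ref{prop:ann_rank}), each summand equals $\dim_{\F_2}\Ann_R$ of the corresponding QC check vector. Here $R=\F_2[u^{\pm1},v^{\pm1}]/(u^{d}-1,v^{d/2}-1)$. So everything reduces to computing $\Ann_R(h_0,h_1)$ for the vector of Example~\ref{ex:qc-vector-case}, and the analogous annihilator for $(g_0,g_1)$.

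\emph{Step 1 (simplify the ideal).} The key observation is that $\Ann_R(h_0,h_1)$ is the annihilator of the ideal $I=(h_0,h_1)$, so I may replace the generators by any generating set of $I$. Put $w:=u^2v$. Then
\[
h_1=1+w+w^2,\qquad h_0=uv(1+u)+u^3v^2(1+u)=uv\,(1+u)(1+w).
\]
Since $uv$ is a unit, $I=\big((1+u)(1+w),\ 1+w+w^2\big)$. Over $\F_2$ one has the Bézout identity $w(1+w)+(1+w+w^2)=1$. Multiplying it by $(1+u)$ shows $1+u\in I$, and hence
\[
I=(1+u,\ 1+w+w^2).
\]

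\emph{Step 2 (compute the annihilator).} An element $f$ kills $1+u$ exactly when $uf=f$. Such $u$-invariant elements are precisely $f=S_u\,g(v)$ with $g\in\F_2[v]/(v^{d/2}-1)$, where $S_u$ is as in Definition~\ref{def:Su}. The map $g\mapsto S_ug$ is injective, since $S_u$ has all $u$-exponents $0,\dots,d-1$ in its support. On such $f$, $w$ acts as $v$, which recovers Lemma~\ref{lem:Su-kills-h0}. The remaining condition $f(1+w+w^2)=0$ therefore becomes $g\,(1+v+v^2)=0$ in $\F_2[v]/(v^{d/2}-1)$. In a cyclic ring $\F_2[v]/(v^N-1)$, the annihilator of $p$ is generated by $(v^N-1)/\gcd(p,v^N-1)$ and has dimension $\deg\gcd(p,v^N-1)$. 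Since $1+v+v^2$ is irreducible and divides $v^N-1$ iff $3\mid N$, this gives $\dim\Ker(H_X^\top)=2$ if $3\mid d/2$ (i.e.\ $6\mid d$) and $0$ otherwise.

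\emph{Step 3 (the $Z$ side).} This is the step I expect to be the main obstacle, since the paper only asserts it by analogy. I would compute $(g_0,g_1)$ explicitly from $a_Z=(0,1)$, reading off cosets and exponents from $P(W)$ exactly as in Example~\ref{ex:qc-vector-case}. I expect it to factor the same way, as a unit times $(1+u)(1+w')$ together with $1+w'+w'^2$ for some $w'=u^{a}v$. The first attempt would be to exploit that $a_Z$ differs from $a_X$ by the reflection/half-translation symmetry underlying Proposition~\ref{prop:equiv}. Failing that, I would simply redo the parity bookkeeping for the seven offsets. Once the analogous generators are in hand, Steps 1--2 repeat verbatim and give the same count. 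Summing the two sides yields $k=4$ if $6\mid d$ and $k=0$ otherwise. A sanity check is that $2\le$ rank deficiency is consistent with Proposition~\ref{prop:deps} on $(12m,6m)$, where $6\mid 6m$.
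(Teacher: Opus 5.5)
Your proposal is correct and follows essentially the same route as the paper's proof in Appendix~\ref{app:kc}. You factor $h_0=uv(1+u)(1+w)$ and $h_1=1+w+w^2$ with $w=u^2v$, use the coprimality of $1+w$ and $1+w+w^2$ to force $f(1+u)=0$, write $f=S_u\,g(v)$, and count the annihilator of $1+v+v^2$ in $\F_2[v]/(v^{d/2}-1)$. Your Step~3 is routine bookkeeping and does go through: from $a_Z=(0,1)$ one gets $g_0=v(1+w+w^2)$ and $g_1=v(1+u)(1+w)$ with the same $w=u^2v$, which is exactly the paper's $\bm{h}_Z$, so your Steps~1--2 apply verbatim.
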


The proof uses the annihilator reduction from Section~\ref{sec:qc} and is given in compressed form in Appendix~\ref{app:kc}.

% ==========================================================
\section{Numerical Experiments and Word Search}\label{sec:numerics}
% ==========================================================

This section explains numerical-results obtained in a reproducible way and how to interpret the kinds of finite-size effects that arise for route-generated, translation-invariant codes.
The aim is not to claim best-in-class performance for the illustrative examples, but to provide a repeatable baseline for (i) validating a proposed word/layout instance and (ii) searching for promising new words under a fixed footprint and connectivity model.
Throughout we emphasize that the quantities reported here are \emph{static} code parameters of the CSS check matrices; circuit-level performance and thresholds require an explicit extraction circuit and a decoder model, which are outside our present scope.

\subsection{Reproducible workflow for validation and word search.}%What we measure.}
For a fixed $(W,L_x,L_y,\alpha)$, we build sparse $H_X,H_Z$, verify commutation ($H_XH_Z^\top=0$), compute $(n,k)$ exactly by rank over $\F_2$, and estimate $d_X,d_Z$ by exhaustive search up to a cutoff weight $w_{\max}$.
If no nontrivial logical is found up to $w_{\max}$, we report $d>w_{\max}$.
In practice, the exhaustive search cost grows quickly with $n$ and $w_{\max}$, so entries of the form $d>w_{\max}$ should be read as \emph{screening lower bounds} rather than definitive distances.
A typical workflow is to use a small cutoff to shortlist candidates and then rerun only the shortlist with higher cutoffs or specialized distance routines. The end-to-end workflow used to generate the tables in this section is summarized in Fig.~\ref{fig:numerics-workflow}.

%\paragraph{Script-based reproducibility.}
All tables were generated using a companion script \cite{rowshan2026dircoode}.
The script parses compressed words (e.g.\ \word{NE$^2$N$^2$E$^2$N}), constructs $P(W)$ from Lemma~\ref{lem:offset}, builds $H_X,H_Z$ on the checkerboard torus, and runs small-weight searches.
The same script can enumerate words up to symmetry using Proposition~\ref{prop:equiv} and can filter by simple route constraints (no immediate backtracking, distinct offsets, and/or ``full support'' without cancellations).
We stress that the tables below use the \emph{row-alternating} layout for concreteness; for words whose odd-difference lattice has more than two ancilla cosets (e.g.\ $L(W)=\Span\{(4,0),(2,2)\}$ in Table~\ref{tab:lattices2}), it is often beneficial to also enumerate all coset-constant layouts allowed by Theorem~\ref{thm:layout} and repeat the same screening procedure.

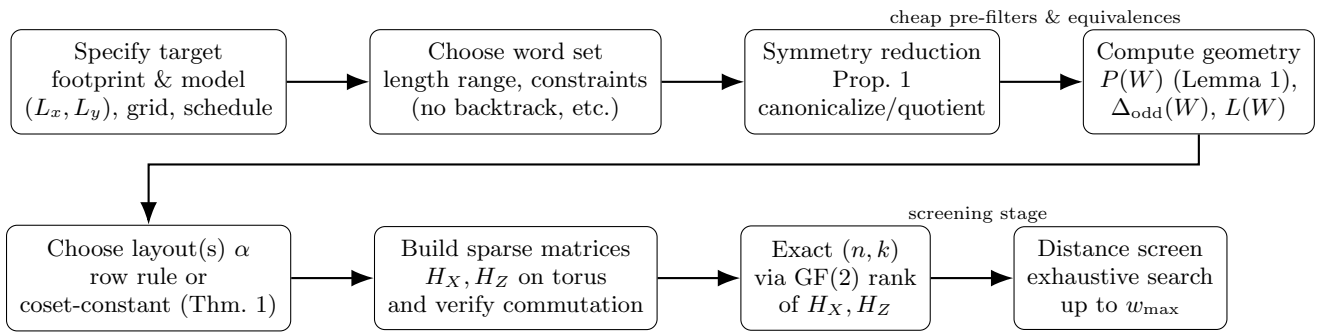
\begin{figure*}[t]
\centering
\begin{tikzpicture}[
  node distance=1.2cm and 1.1cm,
  box/.style={draw,rounded corners,align=center,inner sep=5pt,minimum height=10mm},
  opt/.style={draw,rounded corners,dashed,align=center,inner sep=5pt,minimum height=10mm},
  arrow/.style={-{Latex[length=3mm]},thick}
]
% --- top row ---
\node[box] (spec) {Specify target\\footprint \& model\\$(L_x,L_y)$, grid, schedule};
\node[box,right=of spec] (words) {Choose word set\\length range, constraints\\(no backtrack, etc.)};
\node[box,right=of words] (sym) {Symmetry reduction\\Prop.~\ref{prop:equiv}\\canonicalize/quotient};
\node[box,right=of sym] (pattern) {Compute geometry\\$P(W)$ (Lemma~\ref{lem:offset}),\\$\Delta_{\rm odd}(W)$, $L(W)$};

% --- bottom row ---
\node[box,below=of spec] (layout) {Choose layout(s) $\alpha$\\row rule or\\coset-constant (Thm.~\ref{thm:layout})};
\node[box,right=of layout] (build) {Build sparse matrices\\$H_X,H_Z$ on torus\\and verify commutation};
\node[box,right=of build] (rank) {Exact $(n,k)$\\via GF$(2)$ rank\\of $H_X,H_Z$};
\node[box,right=of rank] (dist) {Distance screen\\exhaustive search\\up to $w_{\max}$};

% % --- outputs / refine ---
% \node[box,below=of rank, xshift=2.7cm] (report) {Report \& store\\tables + selected patterns\\shortlist candidates};
% \node[opt,right=of report] (refine) {Refine shortlist\\increase $w_{\max}$,\\scan nearby $(L_x,L_y)$,\\try more layouts};

% --- arrows ---
\draw[arrow] (spec) -- (words);
\draw[arrow] (words) -- (sym);
\draw[arrow] (sym) -- (pattern);
%\draw[arrow] %(pattern) |- (layout);
\path (pattern.south) ++(0,-0.4) coordinate (midbend);
\draw[arrow] (pattern.south) -- (midbend) -| (layout.north);

\draw[arrow] (layout) -- (build);
\draw[arrow] (build) -- (rank);
\draw[arrow] (rank) -- (dist);
% \draw[arrow] (dist) |- (report);
% \draw[arrow] (report) -- (refine);

% small labels
\node[align=center] at ($(sym)!0.5!(pattern) + (0,0.85)$) {\scriptsize cheap pre-filters \& equivalences};
\node[align=center] at ($(rank)!0.5!(dist) + (0,0.85)$) {\scriptsize screening stage};

\end{tikzpicture}
\caption{Reproducible pipeline used in Sec.~\ref{sec:numerics}. The workflow is identical for validating a fixed $(W,L_x,L_y,\alpha)$ instance or for searching over many words. A practical search runs cheap filters and symmetry quotienting early, computes $(n,k)$ exactly, and uses a small-weight distance screen to build a shortlist before investing in deeper distance calculations or broader layout/boundary scans.}
\label{fig:numerics-workflow}
\end{figure*}

\subsection{Case study family and its scaling signals.}
For $W=\word{NE$^2$NE$^2$N}$ on $12m\times 6m$ tori with row alternation, Table~\ref{tab:case-family} summarizes $(n,k)$ and distance searches.
Two trends are worth highlighting.
First, the observed constant $k=4$ across $m=1,2,3$ is consistent with the analytical stabilizer-dependency mechanism established earlier: Proposition~\ref{prop:deps} guarantees two independent dependencies among the $X$-checks and two among the $Z$-checks on this commensurate torus family, yielding $k\ge 4$ for all $m$.
In other words, the \emph{dimension is protected} by a structural dependency that is robust under scaling along $(12m,6m)$, not a coincidence of small size.
Second, the distances for $m=1,2$ saturate the explicit commuting-operator construction: Proposition~\ref{prop:2m} gives a family of weight-$2m$ commuting operators, hence $d\le 2m$, while Table~\ref{tab:case-family} shows $d_X=d_Z=2m$ for $m=1,2$.
For $m=3$ the table only reports $d_X,d_Z>3$ because the search cutoff was $w_{\max}=3$; combining this with Proposition~\ref{prop:2m} yields the informative bracket $4\le d \le 6$ for $m=3$.
A natural next numerical step (and one we recommend when reporting this family on larger footprints) is to increase $w_{\max}$ for this single word on the structured tori to test the working hypothesis suggested by the first two points, namely that $d=2m$ may hold along $(12m,6m)$ for this layout.

\begin{table}[t]
\centering
\caption{Case-study family $W=\word{NE$^2$NE$^2$N}$ on $12m\times 6m$ tori (row-alternating layout).
Entries $d>3$ mean no nontrivial logical of weight $\le 3$ was found, so $d\ge 4$.}
\label{tab:case-family}
\begin{tabular}{@{}cccccc@{}}
\toprule
$m$ & $L_x\times L_y$ & $n$ & $k$ & $d_X$ & $d_Z$ \\
\midrule
1 & 12$\times$6 & 36  & 4 & 2   & 2 \\
2 & 24$\times$12 & 144 & 4 & 4   & 4 \\
3 & 36$\times$18 & 324 & 4 & $>3$ & $>3$ \\
\bottomrule
\end{tabular}
\end{table}

\subsection{Boundary-condition sensitivity as a design constraint.}
Keeping the same word and layout but changing boundary conditions can collapse $k$.
Table~\ref{tab:kcollapse} illustrates this for thin rectangles $(L_x,L_y)=(2d,d)$, where $n=d^2$ makes the footprint attractive.
The striking feature is that $k$ alternates between $4$ and $0$ as $d$ changes: $k=4$ for $d=6,12,18$ but $k=0$ for $d=8,10,14$ under the same row-alternating layout.
This is a concrete instance of a familiar quasi-cyclic phenomenon: for translation-invariant codes, the ranks of $H_X$ and $H_Z$ (hence $k$) can depend strongly on the torus periods.
Here the effect has a particularly simple interpretation.
The stabilizer dependencies that certify $k\ge 4$ on $(12m,6m)$ are built from cancellations over residue classes modulo $6$ in the vertical coordinate (Appendix~\ref{app:deps}); these cancellations close consistently only when $L_y$ is a multiple of $6$.
When $L_y$ is not divisible by $6$, those specific global dependencies disappear, the check matrices become full-rank in the relevant sense, and the resulting code encodes no logical qubits ($k=0$).
Rather than viewing such instances as failures, Table~\ref{tab:kcollapse} should be read as a practical message for hardware-driven design: \emph{boundary conditions are part of the code specification} for route-generated quasi-cyclic constructions, and one should scan over $(L_x,L_y)$ (or over allowed shear/quotient choices) rather than fixing a single aspect ratio and assuming the dimension will be stable.
It also motivates exploring alternative allowed layouts beyond row alternation: because $W=\word{NE$^2$NE$^2$N}$ has $L(W)=\Span\{(4,0),(2,2)\}$, Theorem~\ref{thm:layout} permits a four-coset family of layouts, some of which can restore $k>0$ on sizes where the row-alternating layout collapses.

\begin{table}[t]
\centering
\caption{Dimension collapse for $W=\word{NE$^2$NE$^2$N}$ on thin rectangles $(L_x,L_y)=(2d,d)$ (row-alternating layout).}
\label{tab:kcollapse}
\begin{tabular}{@{}ccccc@{}}
\toprule
$d$ & $L_x\times L_y$ & $n$ & $k$ \\
\midrule
6  & 12$\times$6  & 36  & 4 \\
8  & 16$\times$8  & 64  & 0 \\
10 & 20$\times$10 & 100 & 0 \\
12 & 24$\times$12 & 144 & 4 \\
14 & 28$\times$14 & 196 & 0 \\
18 & 36$\times$18 & 324 & 4 \\
\bottomrule
\end{tabular}
\end{table}

\subsection{Understanding the Search Table: Insights and Limitations.}
To show how a ``new code search'' can be documented, we scanned words of length $4\le w\le 8$ on the $16\times 8$ torus,
quotiented by symmetry (fixing the first letter $N$), disallowing immediate backtracking, and keeping only commuting CSS instances under row alternation.
Table~\ref{tab:search16x8} lists the top candidates found under this setting.
The goal here is not to optimize a single metric, but to illustrate what a transparent reporting pipeline looks like and what kinds of tradeoffs appear even at modest sizes.

A first observation is that a fixed footprint can admit many distinct commuting words with markedly different \emph{dimension} $k$ at the same $n$.
For example, \word{NE$^2$N$^2$E$^2$N} attains $k=18$ at $n=64$, while several other length-$8$ words yield $k=10$ or $k=6$.
High $k$ at fixed $n$ typically reflects additional stabilizer dependencies (fewer independent checks), and in quasi-cyclic settings such dependencies are often tied to symmetry and commensurability effects similar to those seen in Table~\ref{tab:kcollapse}.
A second observation is that the reported distances for most candidates are modest ($d_X=d_Z=4$ under the cutoff), while \word{NES$^2$EN} stands out with $d_X,d_Z>4$ at the chosen cutoff, making it a natural candidate for deeper distance tests.
At the same time, one should be cautious in over-interpreting the distance column at this stage: for entries with $d>4$ the true distance could be $5$ or substantially larger, and conversely distance can change under different layouts or boundary choices.
This is why we recommend a two-stage procedure: use tables like Table~\ref{tab:search16x8} to shortlist, then rerun the shortlist with larger $w_{\max}$ and with additional allowed layouts (coset-constant layouts from Theorem~\ref{thm:layout}) to check robustness.

Finally, the support-pattern sketches in Fig.~\ref{fig:search16x8-patterns} provide geometric intuition about why very different words can survive the same footprint and layout filter.
The examples include both ``snake-like'' patterns and patterns with pronounced turns or near-loops; empirically, these geometric features correlate with how often translated checks overlap and therefore with the likelihood of additional dependencies and/or short logical operators.
From an implementation perspective, there is also a direct engineering tradeoff hidden behind the word length $w$: longer words correspond to deeper per-check measurement schedules (more two-qubit interactions), which may hurt circuit-level performance even if the static parameters look attractive.
For this reason, when using Table~\ref{tab:search16x8} as a discovery tool, we recommend reporting $w$ together with the effective support size $|P(W)|$ (which may be smaller than $w$ if offsets cancel modulo $2$), and treating short, robust-to-boundaries candidates as especially valuable starting points for circuit-level studies.

\begin{table}[t]
\centering
\caption{Top candidates from a scan on the $16\times 8$ torus (row-alternating layout).
Entries $d>4$ mean no nontrivial logical of weight $\le 4$ was found, so $d\ge 5$.}
\label{tab:search16x8}
\begin{tabular}{@{}cccccc@{}}
\toprule
word $W$ & $w$ & $n$ & $k$ & $d_X$ & $d_Z$ \\
\midrule
NES$^2$EN      & 6 & 64 & 6  & $> 4$ & $> 4$ \\
NE$^2$N$^2$E$^2$N    & 8 & 64 & 18 & 4 & 4 \\
N$^2$ENW$^2$NE    & 8 & 64 & 10 & 4 & 4 \\
N$^2$ESW$^2$SE    & 8 & 64 & 10 & 4 & 4 \\
N$^3$E$^2$NW$^2$     & 8 & 64 & 10 & 4 & 4 \\
NENE$^2$NEN    & 8 & 64 & 10 & 4 & 4 \\
NENENWNW    & 8 & 64 & 10 & 4 & 4 \\
NENESWSW    & 8 & 64 & 10 & 4 & 4 \\
N$^2$E$^2$N$^2$      & 6 & 64 & 6  & 4 & 4 \\
NENW$^2$NES    & 8 & 64 & 6  & 4 & 4 \\
\bottomrule
\end{tabular}
\end{table}

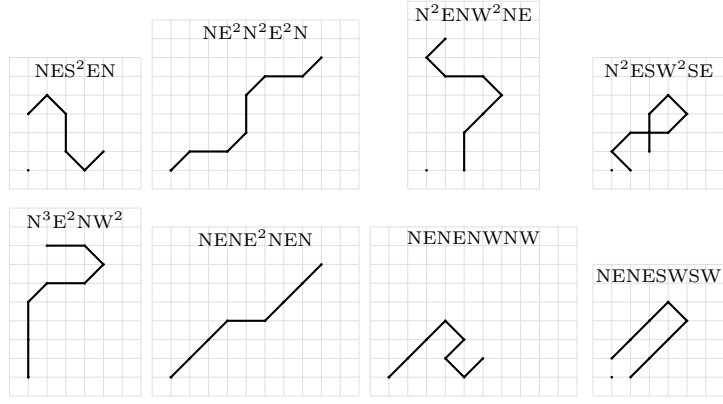
\begin{figure*}[t]
\centering
\begin{tabular}{cccc}
\begin{tikzpicture}[scale=0.25]
  \draw[step=1,gray!25,very thin] (-1,-1) grid (6,6);
  \fill (0,0) circle (2.0pt);
  \foreach \x/\y in {0/3,1/4,2/3,2/1,3/0,4/1} { \fill (\x,\y) circle (2.0pt); }
  \draw[thick] (0,3)--(1,4)--(2,3)--(2,1)--(3,0)--(4,1);
  \node at (2.50,5.40) {\scriptsize NES$^2$EN};
\end{tikzpicture}
 & 
\begin{tikzpicture}[scale=0.25]
  \draw[step=1,gray!25,very thin] (-1,-1) grid (10,8);
  \fill (0,0) circle (2.0pt);
  \foreach \x/\y in {0/0,1/1,3/1,4/2,4/4,5/5,7/5,8/6} { \fill (\x,\y) circle (2.0pt); }
  \draw[thick] (0,0)--(1,1)--(3,1)--(4,2)--(4,4)--(5,5)--(7,5)--(8,6);
  \node at (4.50,7.40) {\scriptsize NE$^2$N$^2$E$^2$N};
\end{tikzpicture}
 & 
\begin{tikzpicture}[scale=0.25]
  \draw[step=1,gray!25,very thin] (-1,-1) grid (6,9);
  \fill (0,0) circle (2.0pt);
  \foreach \x/\y in {2/0,2/2,3/3,4/4,3/5,1/5,0/6,1/7} { \fill (\x,\y) circle (2.0pt); }
  \draw[thick] (2,0)--(2,2)--(3,3)--(4,4)--(3,5)--(1,5)--(0,6)--(1,7);
  \node at (2.50,8.40) {\scriptsize N$^2$ENW$^2$NE};
\end{tikzpicture}
 & 
\begin{tikzpicture}[scale=0.25]
  \draw[step=1,gray!25,very thin] (-1,-1) grid (6,6);
  \fill (0,0) circle (2.0pt);
  \foreach \x/\y in {2/1,2/3,3/4,4/3,3/2,1/2,0/1,1/0} { \fill (\x,\y) circle (2.0pt); }
  \draw[thick] (2,1)--(2,3)--(3,4)--(4,3)--(3,2)--(1,2)--(0,1)--(1,0);
  \node at (2.50,5.40) {\scriptsize N$^2$ESW$^2$SE};
\end{tikzpicture}
 \\[2pt]
\begin{tikzpicture}[scale=0.25]
  \draw[step=1,gray!25,very thin] (-1,-1) grid (6,9);
  \fill (0,0) circle (2.0pt);
  \foreach \x/\y in {0/0,0/2,0/4,1/5,3/5,4/6,3/7,1/7} { \fill (\x,\y) circle (2.0pt); }
  \draw[thick] (0,0)--(0,2)--(0,4)--(1,5)--(3,5)--(4,6)--(3,7)--(1,7);
  \node at (2.50,8.40) {\scriptsize N$^3$E$^2$NW$^2$};
\end{tikzpicture}
 & 
\begin{tikzpicture}[scale=0.25]
  \draw[step=1,gray!25,very thin] (-1,-1) grid (10,8);
  \fill (0,0) circle (2.0pt);
  \foreach \x/\y in {0/0,1/1,2/2,3/3,5/3,6/4,7/5,8/6} { \fill (\x,\y) circle (2.0pt); }
  \draw[thick] (0,0)--(1,1)--(2,2)--(3,3)--(5,3)--(6,4)--(7,5)--(8,6);
  \node at (4.50,7.40) {\scriptsize NENE$^2$NEN};
\end{tikzpicture}
 & 
\begin{tikzpicture}[scale=0.25]
  \draw[step=1,gray!25,very thin] (-1,-1) grid (10,8);
  \fill (0,0) circle (2.0pt);
  \foreach \x/\y in {0/0,1/1,2/2,3/3,4/2,3/1,4/0,5/1} { \fill (\x,\y) circle (2.0pt); }
  \draw[thick] (0,0)--(1,1)--(2,2)--(3,3)--(4,2)--(3,1)--(4,0)--(5,1);
  \node at (4.50,7.40) {\scriptsize NENENWNW};
\end{tikzpicture}
 & 
\begin{tikzpicture}[scale=0.25]
  \draw[step=1,gray!25,very thin] (-1,-1) grid (6,6);
  \fill (0,0) circle (2.0pt);
  \foreach \x/\y in {0/1,1/2,2/3,3/4,4/3,3/2,2/1,1/0} { \fill (\x,\y) circle (2.0pt); }
  \draw[thick] (0,1)--(1,2)--(2,3)--(3,4)--(4,3)--(3,2)--(2,1)--(1,0);
  \node at (2.50,5.40) {\scriptsize NENESWSW};
\end{tikzpicture}
 \\[2pt]
\end{tabular}
\caption{Support patterns $P(W)$ for a subset of the top candidates in Table~\ref{tab:search16x8}.}
\label{fig:search16x8-patterns}
\end{figure*}

\section{Summary and Outlook}\label{sec:conclusion}
% ==========================================================

Directional codes provide a concrete route to low-connectivity qLDPC instances by enforcing that each stabilizer is generated by a short, hardware-compatible route \cite{Geher2025Directional}.
A central takeaway of this paper is that many structural features are visible directly at the level of the word:
Lemma~\ref{lem:offset} makes the route-to-support map explicit, and Theorem~\ref{thm:layout} organizes admissible layouts through a computable lattice $L(W)$.
Beyond repackaging these foundations, we added three analysis tools aimed at practical design:
(i) a formal equivalence/canonicalization theory for removing redundancy in word scans (Sec.~\ref{sec:equiv});
(ii) an inverse-problem criterion separating route-realizable patterns from generic translation-invariant ones (Sec.~\ref{sec:inverse});
and (iii) a quasi-cyclic annihilator reduction explaining boundary-condition sensitivity of $k$ under row-periodic layouts (Sec.~\ref{sec:qc}), culminating in an exact $k$-collapse theorem for the case-study word (Theorem~\ref{thm:kc-collapse}).
These tools complement numerical exploration and can guide footprint/layout choices before expensive circuit-level simulations.

Natural next steps include systematic enumeration of coset-constant layouts (not only row alternation), sharper admissibility tests tailored to specific extraction schedules, and circuit-level evaluation using hardware-aware tools \cite{Mathews2025Placing,EQuS2025HAL} and fast stabilizer simulation \cite{Gidney2021Stim} with modern qLDPC decoders \cite{Poulin2008Iterative,Delfosse2017Almostlinear,Roffe2020Decoding,Wolanski2024Ambiguity}.

% ==========================================================
\appendix
% ==========================================================

\section{Proof of Lemma~\ref{lem:offset}}\label{app:offset}
\begin{proof}
Let the route positions be $S_0,S_1,\dots,S_w$ with $S_j=\sum_{t=1}^j d_t$ and $S_0=0$.
In the doubled-integer convention of \cite{Geher2025Directional}, the $j$th interaction point is represented by the midpoint sum $Q_j=S_{j-1}+S_j$.
Substituting $S_j=S_{j-1}+d_j$ yields $Q_j=2S_{j-1}+d_j = 2\sum_{t=1}^{j-1} d_t + d_j$.
\end{proof}

\section{Proof of Theorem~\ref{thm:layout}}\label{app:layout}
\begin{proof}
Let $S_a=a+P(W)$ denote the support of the check anchored at $a$.
For anchors $a,b$ with $\delta=b-a$, we have $q\in S_a\cap S_b$ iff $q=a+Q_i=b+Q_j$ for some $Q_i,Q_j\in P(W)$, equivalently $\delta=Q_j-Q_i$.
Thus $|S_a\cap S_b|\bmod 2$ equals $\mu(\delta)\bmod 2$ from the multiset $\Delta(W)$.
If $\delta\in\Delta_{\mathrm{odd}}(W)$, then the overlap is odd; hence $a$ and $b$ cannot be opposite types in a commuting CSS layout, so $\alpha(a)=\alpha(b)$ for all such $\delta$.
Closure under integer combinations implies $\alpha$ is constant on cosets of $L(W)=\Span_{\Z}(\Delta_{\mathrm{odd}}(W))$.

Conversely, if $\alpha$ is constant on cosets of $L(W)$, then any odd-overlap displacement connects same-type anchors, so all opposite-type overlaps have even parity and the CSS commutation condition holds.
\end{proof}

\section{Proof of Proposition~\ref{prop:equiv}}\label{app:equiv}
\begin{proof}
A dihedral symmetry is a $\Z^2$ automorphism acting on the step vectors and thus on partial sums $S_j$ and offsets $Q_j=S_{j-1}+S_j$, mapping $P(W)$ to an affine image.
Reversal with inversion traverses the same geometric route in reverse time; it produces the same interaction set up to an overall translation.
Cyclic shifts correspond to changing the start point along the same cycle, again translating $P(W)$.
Thus $P(W')$ is an affine image of $P(W)$ under an automorphism and translation, which induces a qubit permutation and stabilizer relabeling on any compatible torus embedding.
\end{proof}

\section{Canonicalization under word equivalence}\label{app:canonical}
Let $\mathcal{G}$ be the finite group generated by: the eight dihedral symmetries of the square acting on letters; reversal with inversion; and cyclic shifts.
Given $W$, enumerate its orbit $\{g(W):g\in\mathcal{G}\}$ (size at most $16w$).
Define $\mathrm{can}(W)$ to be the lexicographically smallest expanded-letter string in the orbit (ties broken deterministically).
Then $W\sim W'$ iff $\mathrm{can}(W)=\mathrm{can}(W')$.

\section{Proof of Proposition~\ref{prop:reconstruct}}\label{app:reconstruct}
\begin{proof}
%($\Rightarrow$) 
If $W=d_1\cdots d_w$ generates offsets $Q_j=S_{j-1}+S_j$, then $Q_1=S_0+S_1=d_1\in\Dir$.
Moreover $Q_{j+1}-Q_j=d_j+d_{j+1}\in\Sigma$ (Eq.~\eqref{eq:deltaQ}), hence $d_{j+1}=(Q_{j+1}-Q_j)-d_j$, giving the recursion.

%($\Leftarrow$) 
Conversely, assume $Q_1\in\Dir$ and the recursion produces $d_j\in\Dir$.
Define $S_0=0$ and $S_j=\sum_{t=1}^j d_t$.
Inductively, $S_{j-1}+S_j=Q_j$ holds for $j=1$.
If $S_{j-1}+S_j=Q_j$, then using $S_{j+1}=S_j+d_{j+1}$ and $d_j+d_{j+1}=Q_{j+1}-Q_j$,
\[
\begin{aligned}
S_j+S_{j+1}&=S_j+(S_j+d_{j+1})=(S_{j-1}+S_j)+(d_j+d_{j+1})\\&=Q_j+(Q_{j+1}-Q_j)=Q_{j+1}.
\end{aligned}
\]
Thus $Q_j=S_{j-1}+S_j$ for all $j$, so $P(W)=\{Q_j\}$ with the given order.
Uniqueness follows from the recursion.
\end{proof}

\section{Proof of Proposition~\ref{prop:rectadm}}\label{app:rectadm}
\begin{proof}
A nonzero vector $v=(v_x,v_y)$ becomes $0$ modulo $(L_x,L_y)$ iff $L_x\mid v_x$ and $L_y\mid v_y$.
If $|v_x|<L_x$ and $|v_y|<L_y$, this is impossible unless $v=0$.
Apply this to all vectors in $D(W)\cup(D(W)\pm D(W))$.
\end{proof}

\section{Proof of Proposition~\ref{prop:block-circulant-expansion}}\label{app:block-circulant-expansion}
\begin{proof}
A row of $H_X$ anchored at $a_X+(2i,2j)$ is a translate of the row anchored at $a_X$ by the even vector $(2i,2j)$.
On the coarse group $G_0$, this translation is exactly multiplication by $u^i v^j$.
Each offset $Q\in P(W)$ contributes a $1$ in the data coset $q_{\sigma(Q)}$ at displacement $\delta(Q)$, so the anchored row is encoded by
$u^i v^j\cdot u^{\delta_x(Q)}v^{\delta_y(Q)}$ in that coset.
Summing over $Q$ and grouping by $\sigma(Q)\in\{0,1\}$ yields the claim.
\end{proof}

\section{Proof of Proposition~\ref{prop:k-equals-deps}}\label{app:k-equals-deps}
\begin{proof}
Let $m_X=|\Lambda_X|$ and $m_Z=|\Lambda_Z|$ so $m_X+m_Z=n$.
Then
\[
\dim\Ker(H_X^\top)=m_X-\rank(H_X),\qquad \dim\Ker(H_Z^\top)=m_Z-\rank(H_Z).
\]
Substitute into $k=n-\rank(H_X)-\rank(H_Z)$ to obtain
\[
k = n-(m_X-\dim\Ker(H_X^\top))-(m_Z-\dim\Ker(H_Z^\top))
= \dim\Ker(H_X^\top)+\dim\Ker(H_Z^\top).
\]
\end{proof}

\section{Proof of Theorem~\ref{thm:qc-dependency-module}}\label{app:qc-dependency-module}
\begin{proof}
By Proposition~\ref{prop:block-circulant-expansion}, any GF$(2)$ linear combination of $X$-checks with coefficient pattern $f\in R$ produces the data-vector
$(f h_0,f h_1)\in R\oplus R$.
Such a combination is a stabilizer dependency if and only if the resulting data-vector is the zero vector, i.e.\ $f h_0=f h_1=0$ in $R$.
The identification with $\Ker(H_X^\top)$ follows from the standard correspondence between left-kernel vectors and GF$(2)$ relations among rows.
\end{proof}

\section{Proof of Lemma~\ref{lem:Su-kills-h0}}\label{app:Su-kills-h0}
\begin{proof}
Using Example~\ref{ex:qc-vector-case},
\[
S_u h_0 = S_u(uv+u^2v+u^3v^2+u^4v^2).
\]
Since $u^t S_u=S_u$, each term satisfies $S_u u^t = S_u$, hence
\[
S_u h_0 = S_u(v+v+v^2+v^2)=S_u\cdot 0 = 0.
\]
Similarly,
\[
S_u h_1 = S_u(1+u^2v+u^4v^2)=S_u(1+v+v^2).
\]
\end{proof}

\section{Proof of Theorem~\ref{thm:kcollapse-criterion}}\label{app:kcollapse-criterion}
\begin{proof}
By Theorem~\ref{thm:qc-dependency-module}, $\Ker(H_X^\top)$ corresponds to the annihilator $\Ann_R(h_0,h_1)$.
Lemma~\ref{lem:Su-kills-h0} shows that any element of the form $S_u\cdot f(v)$ annihilates $h_0$, and annihilates $h_1$ if and only if
\[
(S_u f(v))\,h_1 = S_u\,f(v)\,(1+v+v^2)=0.
\]
Because $S_u\neq 0$ and generates a 1-dimensional ideal in the $u$-direction, the dimension of such annihilators is exactly the dimension of
\[
\Ker\big(\cdot(1+v+v^2):\ \F_2[v^{\pm1}]/(v^{L_y/2}-1)\to \F_2[v^{\pm1}]/(v^{L_y/2}-1)\big).
\]
For a circulant multiplication operator, this kernel dimension equals $\deg\gcd(1+v+v^2,\ v^{L_y/2}-1)$.
Thus $\dim\Ker(H_X^\top)$ equals this degree.
The $Z$-type statement follows by symmetry (the same word and the complementary ancilla coset yield the same vertical factor).
Finally, Proposition~\ref{prop:k-equals-deps} gives $k=\dim\Ker(H_X^\top)+\dim\Ker(H_Z^\top)$.

The final criterion uses the classical fact that $1+v+v^2$ divides $v^M-1$ over $\F_2$ if and only if $3\mid M$.
Hence $k=2\cdot 2=4$ when $3\mid (L_y/2)$ and $k=0$ otherwise.
\end{proof}

\section{Proof of Proposition~\ref{prop:raw-layout-count}}\label{app:raw-layout-count}
\begin{proof}
Each of the $c$ cosets may be labeled independently by $X$ or $Z$, giving $2^c$ layouts.
The global swap identifies each labeling with its bitwise complement, so the quotient count is $2^c/2=2^{c-1}$.
\end{proof}

\section{Proof of Proposition~\ref{prop:layout-translation-equivalence}}\label{app:layout-translation-equivalence}
\begin{proof}
A translation of the lattice relabels both ancilla anchors and data qubits.
Because each check support is a translate of $P(W)$, translating the entire construction preserves the check pattern and sends the layout $\alpha$ to the permuted layout $\alpha'$.
This induces an isomorphism of the resulting stabilizer groups and hence of the CSS code.
\end{proof}

\section{Proof of Proposition~\ref{prop:ann_rank}}\label{app:annrank}
\begin{proof}
Identify $\F_2^{2|G_0|}$ with $R^2$ by mapping each data-type configuration to its group-algebra element, and identify the set of translated check rows with $R$.
Translation corresponds to multiplying $\bm{h}$ by a monomial, so the row span is the $R$-submodule $R\bm{h}\subset R^2$.
A dependency among translated rows is exactly an $f\in R$ with $f\bm{h}=0$, i.e.\ $f\in\mathrm{Ann}(\bm{h})$.
Thus $\Ker(H^\top)\cong \mathrm{Ann}(\bm{h})$ and $\rank(H)=|G_0|-\dim\mathrm{Ann}(\bm{h})$.
\end{proof}

\section{Proof of Proposition~\ref{prop:deps}}\label{app:deps}
\begin{proof}
We prove the $X$-check statement; the $Z$-check statement is identical by symmetry.
In the row-alternating layout on $G_m=\Z_{12m}\times\Z_{6m}$, $X$-ancillas occupy even-$y$ ancilla rows.
Consider the GF(2) sum of all $X$-checks anchored at ancillas with $y\equiv 0$ or $2$ (mod $6$).
A data qubit $q=(x,y)\in\Lambda_Q$ appears in an $X$-check anchored at $a$ iff $a=q-Q_j$ for some $Q_j\in P(W)$.
Hence $q$ appears in the chosen sum exactly
\[
\#\{j : y-(Q_j)_y\equiv 0 \text{ or }2 \ (\mathrm{mod}\ 6)\}
\]
times.
For $W=\word{NE$^2$NE$^2$N}$, the multiset of $(Q_j)_y$ is $\{1,2,2,3,4,4,5\}$, and a direct residue check over $\Z_6$ shows the count is always even; thus the sum is the zero row, giving one dependency.
Similarly, summing over anchors with $y\equiv 0$ or $4$ (mod $6$) gives a second dependency.
These are independent because their symmetric difference is nonempty.
Thus $\rank(H_X)\le \#\Lambda_X-2$.
Repeating for $Z$ yields $\rank(H_Z)\le \#\Lambda_Z-2$, and since $\#\Lambda_X+\#\Lambda_Z=n$ we obtain $k\ge 4$.
\end{proof}

\section{Proof of Proposition~\ref{prop:2m}}\label{app:2m}
\begin{proof}
Let $S(p)=\{p+jt,\ p+r+jt\}_{j=0}^{m-1}$ with $t=(12,6)$ and $r=(4,2)$.
Because $t$ has order $m$ on $\Z_{12m}\times\Z_{6m}$, $S(p)$ is the union of two length-$m$ orbits.
Fix an opposite-type check anchored at $a$ with support $a+P(W)$.
The overlap parity $|(a+P(W))\cap S(p)|\bmod 2$ depends only on residues along the $t$-orbits; using the same $\{1,2,2,3,4,4,5\}$ residue structure as in Appendix~\ref{app:deps}, one checks that for suitable $p$ every such overlap is even.
Thus the corresponding Pauli commutes with all opposite-type checks.
\end{proof}

\section{Proof of Theorem~\ref{thm:kc-collapse}}\label{app:kc}
\begin{proof}[Proof]
Consider $(L_x,L_y)=(2d,d)$ with $d$ even and row alternation.
Then $G_0=\Z_d\times\Z_{d/2}$ and
\[
R=\F_2[X^{\pm1},Y^{\pm1}]/(X^{d}-1,\;Y^{d/2}-1).
\]
By Corollary~\ref{cor:k_ann}, $k=\dim\mathrm{Ann}(\bm{h}_X)+\dim\mathrm{Ann}(\bm{h}_Z)$.

A direct cell-type computation for $W=\word{NE$^2$NE$^2$N}$ yields polynomial rows
\[
\bm{h}_X=(h_{X,0},h_{X,1}),\]\[
h_{X,1}=1+V+V^2,\quad h_{X,0}=(1+X)\,XY\,(1+V),
\]
and
\[
\bm{h}_Z=(h_{Z,0},h_{Z,1}),\]\[
h_{Z,0}=Y(1+V+V^2),\quad h_{Z,1}=Y(1+X)(1+V),
\]
where $V:=X^2Y$ and $XY$ is a unit in $R$.

Let $f\in\mathrm{Ann}(\bm{h}_X)$, so $f h_{X,0}=f h_{X,1}=0$.
Since $XY$ is a unit and $\gcd(1+V,1+V+V^2)=1$ in $\F_2[V]$, these two equations imply $f(1+X)=0$.
In $\F_2[X]/(X^d-1)$, the annihilator of $(1+X)$ is the ideal generated by
\[
S_X:=1+X+\cdots+X^{d-1},
\]
because $(1+X)S_X=1+X^d=0$.
Hence $f=S_X g$ for some $g\in R$; moreover $S_XX^t=S_X$ forces $g$ to depend only on $Y$, i.e.\ $f=S_X g(Y)$.
Then $f h_{X,1}=0$ reduces to
\[
g(Y)\,(1+Y+Y^2)=0\quad\text{in}\quad \F_2[Y]/(Y^{d/2}-1).
\]
Thus $\mathrm{Ann}(\bm{h}_X)\neq\{0\}$ iff $1+Y+Y^2$ divides $Y^{d/2}-1$, i.e.\ iff $3\mid(d/2)$.
In that case $\dim\mathrm{Ann}(\bm{h}_X)=2$ (a standard cyclic-code count for the $(1+Y+Y^2)$-annihilator in length-$d/2$).

The same argument applies to $\bm{h}_Z$ (the roles of the components swap but the common factor $1+V+V^2$ is unchanged), yielding $\dim\mathrm{Ann}(\bm{h}_Z)=\dim\mathrm{Ann}(\bm{h}_X)$.
Therefore
\[
k=
\begin{cases}
4,& 3\mid(d/2)\ \Leftrightarrow\ 6\mid d,\\
0,& \text{otherwise}.
\end{cases}
\]
\end{proof}

\bibliography{refs_dir_codes}

\end{document}